\documentclass[acmsmall,nonacm]{acmart}

\usepackage{bm} 

\usepackage{hyperref}
\usepackage[colorinlistoftodos]{todonotes}

\usepackage{xspace}
\usepackage{algorithmicx}
\usepackage{algorithm}
\usepackage{algpseudocode}

\usepackage[shortlabels]{enumitem}

\usepackage{tikz}
\usetikzlibrary{calc}
\usetikzlibrary{shapes,decorations.markings,arrows.meta,fit}

\usepackage[normalem]{ulem}

\usepackage{thmtools, thm-restate}

\usepackage{microtype}

\usepackage{relsize}

\newcommand{\rai}{RelationalAI}

\newcommand{\setof}[2]{\{{#1}\mid{#2}\}}        
\newcommand{\dom}{\textsf{Dom}}

\newcommand{\polylog}{\text{\sf polylog}}

\newcommand{\argmin}{\mathrm{argmin}}

\newcommand{\calA}{\mathcal A}

\newcommand{\calI}{\mathcal I}

\newcommand{\calS}{\mathcal S}

\newcommand{\calD}{\mathcal D}

\newcommand{\calT}{\mathcal T}

\newtheorem{thm}{Theorem}[section]

\theoremstyle{definition}              

\newtheorem{claim}{Claim}

\newtheorem{remark}[thm]{Remark}

\newcommand{\defeq}{\stackrel{\text{def}}{=}}

\newcommand{\R}{\mathbb R} 

\newcommand{\cd}{\text{ :- }}

\newcommand{\ceil}[1]{\lceil{#1}\rceil}

\newcommand{\out}{\text{\sf OUT}}

\newcommand{\atoms}{\text{\sf atoms}}
\newcommand{\nodes}{\text{\sf nodes}}
\newcommand{\bs}{\text{\sf BS}}

\newcommand{\true}{\text{\sf true}}

\newcommand{\primalalgo}{\textsf{Jaguar}\xspace}
\newcommand{\calibrate}{\textsf{Calibrate}\xspace}

\newcommand{\heavylightpartition}{\textsf{heavy-light-partition}\xspace}
\newcommand{\equaldegreepartition}{\textsf{equal-degree-partition}\xspace}

\newcommand{\subw}{\textsf{subw}}
\newcommand{\entw}{\textsf{entw}}

\newcommand{\revision}[1]{#1}

\newcommand{\ov}{\overline}
\newcommand{\tl}{\tilde}

\newcommand{\scq}{{\sf{\#CQ}}\xspace}
\newcommand{\scqs}{{\sf{\#CQs}}\xspace}

\allowdisplaybreaks

\begin{document}

\title{Jaguar: A Primal Algorithm for Conjunctive Query Evaluation in Submodular-Width Time}

\author{Mahmoud Abo Khamis}
\orcid{0000-0003-3894-6494}
\email{mahmoudabo@gmail.com}
\affiliation{
  \institution{\rai{}}
  \city{Berkeley}
  \state{CA}
  \country{USA}
}

\author{Hubie Chen}
\orcid{0009-0005-4025-8086}
\email{hubie.chen@kcl.ac.uk}
\affiliation{
  \institution{King's College London}
  \city{London}
  \country{UK}
}

\begin{abstract}
    The \emph{submodular width} is a complexity measure of conjunctive queries, which assigns a nonnegative real number
$\subw(Q)$ to each conjunctive query $Q$.
An existing algorithm, the PANDA algorithm, performs conjunctive query evaluation in polynomial time where the exponent is essentially $\subw(Q)$.  Formally, for every Boolean conjunctive query $Q$, the PANDA algorithm performs conjunctive query evaluation for $Q$ in time $O(N^{\subw(Q)} \cdot \polylog(N))$, where $N$ denotes the input database size; moreover, there is complexity-theoretic evidence that, for a number of Boolean conjunctive queries, no exponent strictly below $\subw(Q)$ can be achieved by combinatorial algorithms.
On the outer level, the submodular width of a conjunctive query $Q$ can be described as the optimum of a maximization problem over certain polymatroids; polymatroids are set functions on the variables of $Q$ that satisfy Shannon inequalities.  The PANDA algorithm in a sense works in the dual space of this maximization problem; it makes use of information theory, manipulates Shannon inequalities, and 
transforms a conjunctive query into a set of disjunctive datalog rules which are individually solved.

In this article, we introduce a new algorithm for conjunctive query evaluation which achieves, for each Boolean conjunctive query $Q$ and for all $\epsilon > 0$, a running time of
$O(N^{\subw(Q)+\epsilon})$.  This new algorithm's
description and analysis are, in our view, significantly simpler than those of PANDA.  We refer to it as a \emph{primal} algorithm as it operates in the primal space of the described maximization problem, by throughout maintaining a feasible primal solution, namely, a polymatroid.  Indeed, this algorithm deals directly with the input conjunctive query and adaptively computes a sequence of joins, in a guided fashion, so that the cost of these join computations is bounded.  
Additionally, this algorithm can achieve the stated runtime for the generalization of submodular width incorporating degree constraints.
We dub our algorithm \emph{\primalalgo}, as it is a join-adaptive guided algorithm.

\end{abstract}

\begin{CCSXML}
<ccs2012>
   <concept>
       <concept_id>10003752.10010070.10010111.10011711</concept_id>
       <concept_desc>Theory of computation~Database query processing and optimization (theory)</concept_desc>
       <concept_significance>500</concept_significance>
       </concept>
   <concept>
       <concept_id>10002951.10002952.10003190.10003192.10003426</concept_id>
       <concept_desc>Information systems~Join algorithms</concept_desc>
       <concept_significance>500</concept_significance>
       </concept>
   <concept>
       <concept_id>10002950.10003714.10003716.10011138.10010041</concept_id>
       <concept_desc>Mathematics of computing~Linear programming</concept_desc>
       <concept_significance>300</concept_significance>
       </concept>
 </ccs2012>
\end{CCSXML}

\ccsdesc[500]{Theory of computation~Database query processing and optimization (theory)}
\ccsdesc[500]{Information systems~Join algorithms}
\ccsdesc[300]{Mathematics of computing~Linear programming}

\keywords{conjunctive queries; submodular width; polymatroids; primal/dual}

\maketitle

\section{Introduction}
\label{sec:intro}

\emph{Conjunctive query evaluation}, the computational problem of evaluating a conjunctive query on a relational database, is a fundamental problem that appears in many guises all throughout computer science, for example, in database management, graph algorithms, logic, constraint satisfaction, and graphical model inference~\cite{DBLP:conf/pods/KhamisNR16,DBLP:conf/pods/GottlobGLS16,GroheMarx14-fractional-edge-covers,DBLP:books/aw/AbiteboulHV95}.
In the database context, conjunctive queries are considered to be the most commonly occurring queries and are the most heavily studied form of database queries~\cite{DBLP:books/aw/AbiteboulHV95}.


With the motivation of understanding which classes of Boolean conjunctive queries allow for tractable query evaluation, D. Marx introduced a measure of Boolean conjunctive queries called the \emph{submodular width}~\cite{DBLP:journals/jacm/Marx13}.  Let us assume for now that all conjunctive queries under discussion are Boolean.
In an established parameterized complexity model, Marx proved a dichotomy theorem~\cite{DBLP:journals/jacm/Marx13}: on a class of conjunctive queries having \emph{bounded submodular width}---that is, for which there exists a constant bounding the submodular width of all queries in the class---query evaluation is tractable; moreover, on any class of conjunctive queries not having bounded submodular width, query evaluation is intractable, under a complexity-theoretic assumption.  This dichotomy theorem thus comprehensively describes every class of conjunctive queries as either tractable or intractable, with respect to query evaluation, and reveals that the submodular width serves as the critical parameter describing the location of the boundary between these two behaviors.

While Marx's dichotomy theorem describes the tractability status of every class of conjunctive queries, it left open a more fine-grained question regarding the data complexity of conjunctive queries.  For every Boolean conjunctive query $Q$, it is known that conjunctive query evaluation on $Q$ can be performed in polynomial time: there is a constant $\alpha$ such that conjunctive query evaluation on $Q$ can be performed in $O(N^\alpha)$ time; here, $N$ denotes the input database size.  However, there is no general description of how to determine, for a conjunctive query $Q$, the lowest exponent $\alpha$ enjoying this property.  Nonetheless, there are some general time complexity upper bounds known, in this vein, which are described in terms of the submodular width.  Let $\subw(Q)$ denote the submodular width of a Boolean conjunctive query~$Q$.

\begin{itemize}
\item
Abo Khamis, Ngo, and Suciu~\cite{DBLP:conf/pods/Khamis0S17,theoretics:13722} presented an information-theoretic algorithm called \emph{PANDA} that, for any Boolean conjunctive query $Q$, performs conjunctive query evaluation on $Q$
within time $O(N^{\subw(Q)} \cdot \polylog(N))$.

\item
Drawing upon Marx's algorithm and analysis~\cite{DBLP:journals/jacm/Marx13},\footnote{It is worth remarking here that the running time claimed by Marx~\cite{DBLP:journals/jacm/Marx13} for his original algorithm was $N^{O(\subw(Q))}$ times a multiplicative overhead.}
 Berkholz and Schweikardt~\cite{berkholz_et_al:LIPIcs.MFCS.2019.58} presented an algorithm, which we call the Marx-Berkholz-Schweikardt algorithm.  For any Boolean conjunctive query $Q$ and any $\epsilon > 0$, this algorithm performs query evaluation on~$Q$ within time $O(N^{2 \cdot \subw(Q) + \epsilon})$.
\end{itemize}
That is, the first algorithm essentially achieves runtime with exponent $\subw(Q)$, while the second algorithm essentially achieves runtime with exponent of $2 \cdot \subw(Q)$.
Let us mention here that there are techniques for lower bounding the fine-grained time complexity of evaluating conjunctive queries.  In particular, Fan, Koutris, and Zhao~\cite{fan_et_al:LIPIcs.ICALP.2023.127} presented such techniques and---among other results---proved that, for a number of different queries, no combinatorial algorithm can achieve a running time of the form $O(N^{\subw(Q)-\epsilon})$, with $\epsilon > 0$, unless an established fine-grained complexity-theoretic assumption fails.

The two algorithms named above, in addition to differing in terms of the time upper bounds that are demonstrated, differ in two aspects which we wish to highlight.
\begin{itemize}
\item  First, the PANDA algorithm was shown to perform query evaluation not just within the stated time bound
$O(N^{\subw(Q)} \cdot \polylog(N))$, but achieves a running time of similar form where the submodular width is replaced with a natural generalization of the submodular width that incorporates information about the degrees of an input database.  Such information is formalized using a notion which is called \emph{degree constraints} and also, in the context of conjunctive query evaluation, is simply called a \emph{statistics}.  A \emph{statistics} is a pair $(\Delta, \bm n)$ where, speaking on a high level, each element of $\Delta$ is a term consisting of a pair of variable subsets, and where $\bm n$ maps each term in $\Delta$ to a number.
The mentioned generalization of the submodular width~\cite{theoretics:13722}
maps each query $Q$ and statistics $(\Delta, \bm n)$
to a value $\subw(Q, \Delta, \bm n)$, and PANDA can achieve a running time of
$O(N^{\subw(Q, \Delta, \bm n)} \cdot \polylog(N))$ when evaluating
a query $Q$ on databases satisfying the statistics $(\Delta, \bm n)$.
For any query $Q$ and any statistics $(\Delta, \bm n)$,
it holds that $\subw(Q, \Delta, \bm n) \leq \subw(Q)$;
it is thus transparent to see that incorporating statistics can only result in an improved running time.
We are not aware of any work showing that an algorithm along the lines of the Marx-Berkholz-Schweikardt algorithm can utilize this information.

\item Second, there is---in our view---a notable difference
both in the objects that these algorithms and their analyses
reason about, and also in the tools used by the analyses.
To describe this difference, we begin by reflecting on the definition of submodular width.  The submodular width is defined as the optimum of a maximization problem over the space of certain polymatroids,
where the goal is to find the polymatroid maximizing a certain objective.
The dual problem is a minimization problem over the space of Shannon inequalities, where the goal is
to infer a Shannon inequality yielding the tightest upper bound on the submodular width.
It can be shown by LP duality that these two problems have the same optimal value~\cite{theoretics:13722}.
\vspace{2pt}

The PANDA algorithm operates in the dual space, that is, the space of Shannon inequalities;
it always maintains a Shannon inequality that upper bounds the submodular width.
In contrast, the Marx-Berkholz-Schweikardt algorithm deals directly with the conjunctive query under evaluation; it uses recursion, and for any run of the algorithm, it is shown that at each leaf of the recursive tree, there is a polymatroid describing the data at that leaf.
We view PANDA as a \emph{dual} algorithm and the Marx-Berkholz-Schweikardt algorithm as a \emph{primal} algorithm.  Although we do not
give formal definitions of \emph{dual} and \emph{primal} algorithms,
we believe that the distinction is a meaningful one; certainly, information theory is not explicitly employed by the Marx-Berkholz-Schweikardt algorithm nor its analysis.

\end{itemize}

\paragraph*{\bf Contributions.}
We introduce a new algorithm for conjunctive query evaluation called \emph{\primalalgo}.
Our algorithm has the following features.

\begin{itemize}

\item It is fast. It achieves, for every query $Q$ and value $\epsilon > 0$, a running time of
$O(N^{\subw(Q) + \epsilon})$.  Its running time is thus similar to that of PANDA.

\item It is general.  It can make use of degree information; for every query $Q$, any statistics $(\Delta, \bm n)$, and any value $\epsilon > 0$, it can achieve a running time of
$O(N^{\subw(Q, \Delta, \bm n) + \epsilon})$
when evaluating $Q$ on databases satisfying the statistics $(\Delta, \bm n)$.

\item It and its analysis are relatively simple.
With respect to the above distinction between dual and primal algorithms, we view \primalalgo as a primal algorithm.  Throughout an execution, \primalalgo maintains a set function $\bm g$ that is defined on all variable subsets and that describes the data; this set function---in particular, the points at which it violates submodularity---is used to guide the process of partitioning the data and computing joins.  Our analysis shows and crucially uses the fact that this set function induces a polymatroid that---in a certain precise sense---lower bounds the submodular width.
\vspace{2pt}

We believe that \primalalgo and its analysis directly shed light on the role of submodularity in conjunctive query evaluation.
We view \primalalgo and its analysis as simpler than that of PANDA; note that its study requires no explicit use of information theory.  At the same time, we view \primalalgo and its analysis as simpler and more direct than that of the Marx-Berkholz-Schweikardt algorithm, which makes use of two notions of \emph{consistency} and, throughout its execution, generically searches for pairs of subsets of the variables that violate a condition called \emph{$\epsilon$-uniformity}; when a violation is found, a general partitioning procedure is invoked.

\end{itemize}

To sum up, we believe that the introduction of \primalalgo and its analysis is valuable, since it (1) essentially achieves the runtime exponent $\subw(Q, \Delta, \bm n)$ that can be essentially achieved by PANDA, but is not known to be achievable by the Marx-Berkholz-Schweikardt algorithm, which essentially achieves an exponent of $2 \cdot \subw(Q)$, and (2) accomplishes this via a simple algorithm having an analysis that is, in our view, quite direct and deals very openly and clearly with the notion of submodularity.

Let us remark that our main result on \primalalgo's performance addresses not just Boolean conjunctive queries, but also addresses conjunctive queries that may have free variables.  We work with a suitable and established generalization of submodular width to arbitrary conjunctive queries, and show that \primalalgo achieves a runtime of
$O(N^{\subw(Q, \Delta, \bm n)+\epsilon}+ \out)$, where $\out$ denotes the output size.
Note that each of the PANDA and the Marx-Berkholz-Schweikardt algorithms, on general conjunctive queries, can similarly achieve the running time described for them, plus the output size.

\paragraph*{\bf Related work.}  The introduction of the submodular width
followed a long and substantial body of work defining and studying complexity measures of conjunctive queries, including hypertree width~\cite{GottlobLeoneScarcello02-hypertreedecomposisions,GottlobLeoneScarcello03-characterizations}, generalized hypertree width~\cite{ChenDalmau05-hypertree,AdlerGottlobGrohe07-hypergraphinvariants}, and fractional hypertree width~\cite{Marx10-approximating-fhw,GroheMarx14-fractional-edge-covers,GottlobLPR21-analysis-decompositions}; here, we have offered a sampling of references.  For each conjunctive query $Q$, all of these measures provide upper bounds on the discussed exponent describing the running time of query evaluation on $Q$.
Abo Khamis, Hu, and Suciu~\cite{DBLP:journals/pacmmod/KhamisHS25} introduced an algebraic extension of the submodular width whose corresponding algorithm can be seen as combining techniques used by PANDA
with fast matrix multiplication.

As mentioned, Fan, Koutris, and Zhao~\cite{fan_et_al:LIPIcs.ICALP.2023.127}
presented techniques for lower bounding this discussed exponent and, for a number of queries, demonstrated (essentially) that the submodular width serves as a lower bound for combinatorial algorithms; these results thus in a sense show that, on these queries, PANDA and \primalalgo yield optimal running times---as combinatorial algorithms.
Related lower bound technology that also applies to non-combinatorial algorithms was developed by Bringmann and Gorbachev~\cite{10.1145/3717823.3718141}, who studied queries~$Q$ where $\subw(Q)<2$ and all relations are binary, and they focused on two variants of these queries:
\begin{itemize}
    \item {\em Aggregate} queries over the {\em tropical semiring}, i.e., $(min, +)$.
    \item {\em Full} conjunctive queries, i.e., conjunctive queries {\em without projections}.
\end{itemize}
Bringmann and Gorbachev proved that in the regime where $\subw(Q)<2$ and all relations are binary,
the submodular width serves as a lower bound for both query variants {\em even for non-combinatorial algorithms}, under well-accepted fine-grained complexity conjectures.
In particular, for both aggregate queries over the tropical semiring and for full conjunctive queries, using fast matrix multiplication (FMM) cannot help lower the runtime.
The intuition is that for the tropical semiring, FMM does not immediately apply because the tropical semiring does not have additive inverses, and for full conjunctive queries, FMM cannot does not help because
it can only compute a join {\em followed by a projection}.\footnote{Given two relations $R(X,Y)$ and $S(Y,Z)$, FMM allows us to compute the join $R(X,Y) \Join S(Y,Z)$ followed by a projection $\pi_{X,Z}(R(X,Y) \Join S(Y,Z))$; however, it does not immediately allow us to list the $Y$-values, hence does not immediately compute the {\em full} join.}
See~\cite{10.1145/3717823.3718141} for the formal lower bound arguments.

Fan, Koutris, and Zhao~\cite{10.1145/3651588} presented a different form of lower bounds, namely, circuit lower bounds, relative to a measure related to the submodular width; we discuss this work further in Section~\ref{sec:conclusion}.

%


%
%

\revision{\paragraph*{\bf Organization.}
Preliminaries and notation are given in Section~\ref{sec:notation}.
Section~\ref{sec:truncation-lemma} explains a key technical lemma that will be at the heart of the new \primalalgo algorithm.
In Section~\ref{sec:example}, we present a warmup example that demonstrates the main ideas and intuition behind the algorithm.
In Section~\ref{sec:algorithm}, we explain the general \primalalgo algorithm in detail.
We prove its correctness and analyze its runtime in Section~\ref{sec:analysis}.
Finally, in Section~\ref{sec:conclusion}, we conclude with a recap and some open problems.}

\section{Preliminaries}
\label{sec:notation}

Given a function $f:A\to B$, a value $a \in A$, and a value $b \in B$,
we define $f[a \to b]$ to be a new function $f':A\to B$ such that
\begin{align}
    f'(x) &\defeq \begin{cases}
        b, &\text{if } x = a\\
        f(x), &\text{otherwise}
    \end{cases}
    \label{eq:function-update}
\end{align}

Throughout the paper,
we {\em fix} a finite set of variables $\bm V$ where each variable $X \in \bm V$ takes values from
an infinite domain $\dom$.
We use a capital letter $X$ to denote a variable and a small letter $x$ to denote a value of the corresponding variable $X$.
We use a bold capital letter $\bm X$ to denote a set of variables and a bold small letter $\bm x$ to denote a tuple of values for the variables in $\bm X$.
We use $\dom^{\bm X}$ to denote the set of all possible tuples of values $\bm x$ for the variables in $\bm X$.
\revision{Given a tuple $\bm x \in \dom^{\bm X}$ and a subset $\bm Y\subseteq \bm X$,
we use $\pi_{\bm Y}(\bm x)$ to denote the {\em projection of $\bm x$ onto $\bm Y$}, which is the tuple of values for the variables in $\bm Y$ that are consistent with $\bm x$.}

\paragraph*{\bf Database Instances}
A database instance $\calD$ is a finite set of relation instances $\{R_1^\calD(\bm X_1),$ $\ldots,$$ R_m^\calD(\bm X_m)\}$,
where each relation instance $R_i^\calD(\bm X_i) \subseteq \dom^{\bm X_i}$ is a finite set of tuples over the variables $\bm X_i \subseteq \bm V$.
When the database instance $\calD$ is clear from the context, we drop the superscript and write $R_i(\bm X_i)$ instead of $R_i^\calD(\bm X_i)$.
The {\em size} of a database instance $\calD$, denoted $|\calD|$, is defined as the total number of tuples in all relation instances in $\calD$, i.e., $|\calD| \defeq \sum_{i=1}^m |R_i(\bm X_i)|$.
A database instance $\calD$ is called {\em signature-unique} if, for every $\bm X \subseteq \bm V$, $\calD$ contains at most one relation instance $R(\bm X)$.
\revision{Note that according to this definition, a signature-unique database instance $\calD$
can still contain two relation instances $R_1(\bm X_1)$ and $R_2(\bm X_2)$ where $\bm X_1\subset \bm X_2$.}
Given a signature-unique database instance $\calD$ and a relation instance $R(\bm X)$ where $\bm X \subseteq \bm V$, we define {\em $\calD$ augmented with $R(\bm X)$}, denoted $\calD\uplus \{R(\bm X)\}$, to be a new database instance 
that results from adding $R(\bm X)$ to $\calD$, if no $R'(\bm X)$ is already in $\calD$,
or, otherwise, by replacing the existing $R'(\bm X)$ with its intersection with $R(\bm X)$:
\begin{align}
    \calD\uplus \{R(\bm X)\} &\defeq \begin{cases}
        \calD \cup \{R(\bm X)\}, &\text{if there is no } R'(\bm X) \in \calD\\
        (\calD \setminus \{R'(\bm X)\}) \cup \{R(\bm X) \cap R'(\bm X)\}, &\text{if there is } R'(\bm X) \in \calD
    \end{cases}
    \label{eq:db-augmentation}
\end{align}
Note that $\calD \uplus \{R(\bm X)\}$ is also signature-unique.

\revision{We use the standard notions of {\em projection} and {\em selection} of relations.
In particular, given a relation instance $R(\bm Z)$ and a subset $\bm X\subseteq \bm Z$,
we define the {\em projection of $R(\bm Z)$ onto $\bm X$}, denoted $\pi_{\bm X}(R)$, as the relation instance over $\bm X$ that contains the projections of all tuples in $R$ onto $\bm X$.
Moreover, given a tuple $\bm x\in \dom^{\bm X}$, we use $\sigma_{\bm X=\bm x}(R)$
to denote the {\em selection of $R$ with $\bm X = \bm x$}, which is the relation instance over $\bm Z$ that contains all tuples $\bm z$ in $R$ that are consistent with $\bm x$, i.e., where $\pi_{\bm X}(\bm z) = \bm x$.}
Given a relation instance $R(\bm Z)$, two sets of variables $\bm X, \bm Y \subseteq \bm Z$,
and a tuple $\bm x$ for $\bm X$,
we define the {\em degree} of $\bm Y$ given $\bm X = \bm x$ in $R$, denoted $\deg_R(\bm Y|\bm X = \bm x)$, and the {\em degree} of $\bm Y$ given $\bm X$ in $R$, denoted $\deg_R(\bm Y|\bm X)$, as follows, respectively:
\begin{align*}
    \deg_R(\bm Y|\bm X=\bm x) &\defeq
        |\pi_{\bm Y}(\sigma_{\bm X =\bm x}(R))|\\
    \deg_R(\bm Y|\bm X) &\defeq \max_{\bm x\in \dom^{\bm X}} \deg_R(\bm Y|\bm X=\bm x)
\end{align*}


\paragraph*{\bf Conjunctive Queries}
A conjunctive query (CQ) $Q$ is specified with a set of variables $\bm V$, a set of {\em free variables} $\bm F$, and a set of atoms $\atoms(Q)$, and has the form:
\begin{align}
    Q(\bm F) &\cd  \bigwedge_{R(\bm X)\in \atoms(Q)} R(\bm X), \label{eq:cq}
\end{align}
where $\bm V = \bigcup_{R(\bm X)\in\atoms(Q)} \bm X$.
The query $Q$ is called {\em Boolean} if $\bm F = \emptyset$ and {\em full} if $\bm F = \bm V$.
A database instance $\calD$ is said to {\em include the schema} of $Q$ if, for every atom $R(\bm X) \in \atoms(Q)$, $\calD$ contains a corresponding relation instance $R^\calD(\bm X)$.
A database instance $\calD$ is said to {\em match the schema} of~$Q$
if it includes the schema of $Q$ {\em and} contains no other relation instances.
We adopt the standard semantics for evaluating a CQ $Q$ on a database instance $\calD$ that includes the schema of $Q$, and we use $Q(\calD)$ to denote the evaluation result.
Throughout the paper, we use {\em data complexity}, i.e., we assume that the query $Q$ is fixed, hence its size is a constant, and we measure the complexity of evaluating $Q$ only in terms of the size, $N$, of the input database instance $\calD$.

\paragraph*{\bf Statistics and Polymatroids.}
Using definitions and notation from~\cite{theoretics:13722}, we define a collection of {\em statistics} over a variable set $\bm V$ to be a pair $(\Delta, \bm n)$, where:
\begin{itemize}
    \item $\Delta$ is a set of {\em conditional terms} of the form $\bm Y|\bm X$ where $\bm X, \bm Y\subseteq \bm V$.
    \item $\bm n:\Delta \rightarrow\R_+$ maps each term $\bm Y|\bm X$ in $\Delta$ to a non-negative real number $n_{\bm Y|\bm X}$.
\end{itemize}
Let $\calD$ be a database instance of size $N$.
The instance $\calD$ is said to {\em satisfy} the statistics $(\Delta, \bm n)$,
denoted $\calD\models (\Delta,\bm n)$,
if for every conditional term $(\bm Y|\bm X)\in \Delta$,\;
$\calD$ contains a relation $R(\bm Z)$,  called the \emph{guard} of $\bm Y|\bm X$,
that has variables $\bm Z\supseteq \bm X \cup\bm Y$
and satisfies:
\begin{align}
    \log_N\deg_R(\bm Y|\bm X) &\leq n_{\bm Y|\bm X}
    \label{eq:DC-satisfaction}
\end{align}
Throughout the paper, {\em all} logarithms are base $N$ where $N$ is the size of the input database instance~$\calD$. This is in contrast to~\cite{theoretics:13722}, where logarithms are base $2$.

A set function $\bm h:2^{\bm V}\rightarrow \R_+$ is called a {\em polymatroid} if it satisfies {\em Shannon inequalities}:
\begin{align}
    h(\emptyset) = 0&\label{eq:normalization} \\
    h(\bm X) \leq h(\bm Y), &\quad\forall \bm X \subseteq \bm Y \subseteq \bm V, &\text{(Monotonicity)}\label{eq:monotonicity}\\
    h(\bm X) + h(\bm Y) \geq h(\bm X \cap \bm Y) + h(\bm X \cup \bm Y), &\quad\forall \bm X, \bm Y \subseteq \bm V, &\text{(Submodularity)}\label{eq:submodularity}
\end{align}
A set function $\bm h:2^{\bm V}\rightarrow \R_+\cup\{\infty\}$ is said to {\em satisfy} given statistics $(\Delta, \bm n)$,
denoted $\bm h\models(\Delta, \bm n)$,
if for every conditional term $\bm Y|\bm X\in \Delta$, we have $h(\bm Y|\bm X)\defeq h(\bm Y\cup \bm X)-h(\bm X)\leq n_{\bm Y|\bm X}$.
(In general, we deal with arithmetic expressions and comparisons involving infinity in the standard way.  So here, we understand that this condition fails when
$h(\bm Y\cup \bm X) = \infty$, and that it holds when
$h(\bm Y\cup \bm X) \neq \infty = h(\bm X)$.
)

\paragraph*{\bf Tree decompositions}
A {\em tree decomposition} of $Q$ is
a pair $(T,\chi)$, where $T$ is a tree and $\chi: \nodes(T) \rightarrow 2^{\bm V}$ is a
map from the nodes of $T$ to subsets of $\bm V$ that satisfies the following properties: for every atom $R(\bm X)\in\atoms(Q)$,
there is a node $t \in \nodes(T)$ such that $\bm X \subseteq \chi(t)$; and, for every variable
$X \in \bm V$, the set $\setof{t}{X \in \chi(t)}$ forms a connected sub-tree of $T$. Each
set $\chi(t)$ is called a {\em bag} of the tree decomposition, and we will assume w.l.o.g.
that the bags are distinct, i.e.~$\chi(t)\neq\chi(t')$ when $t\neq t'$ are nodes in $T$.
A {\em free-connex} tree decomposition of~$Q$ is a tree decomposition of $Q$ with the
additional property that there is a connected subtree $T'$ of $T$
that contains all the free variables and no other variables~\cite{DBLP:conf/csl/BaganDG07,10.1145/3426865}, i.e.,
$\bigcup_{t \in \nodes(T')} \chi(t) = \bm F$.
Let $\calT(Q)$ be the set of {\em free-connex} tree decompositions of $Q$.
(If the query is Boolean or full, i.e., if $\bm F = \emptyset$ or $\bm F = \bm V$,
then $\calT(Q)$ is the set of {\em all} tree decompositions of $Q$.
Readers only interested in Boolean or full conjunctive queries can thus ignore the free-connex notion.)

\paragraph*{\bf The Submodular Width}
The submodular width was originally introduced by Daniel Marx~\cite{DBLP:journals/jacm/Marx13} and was later generalized in~\cite{theoretics:13722}.
We adopt the generalized definition from~\cite{theoretics:13722}.
In particular, given a query $Q$ and statistics $(\Delta, \bm n)$, we define the {\em submodular width} of $Q$ w.r.t.~the statistics $(\Delta, \bm n)$ as:
\begin{align}
\subw(Q,\Delta, \bm n) &\quad\defeq\quad
    \max_{\substack{\text{Polymatroid $\bm h$}\\\bm h\models(\Delta, \bm n)}}\quad
    \min_{(T,\chi) \in \calT(Q)}\quad
    \max_{t \in\nodes(T)}\quad
    h(\chi(t))
    \label{eq:subw}
\end{align}
The above definition generalizes the original definition of Daniel Marx~\cite{DBLP:journals/jacm/Marx13} in two aspects:
\begin{itemize}
    \item It accounts for arbitrary statistics $(\Delta, \bm n)$, as opposed to the special case of ``edge-domination'' constraints\footnote{Edge-domination constraints are restricted to the form $h(\bm X)\leq 1$ for every atom $R(\bm X)$ in the query.} from~\cite{DBLP:journals/jacm/Marx13}.
    \item It handles arbitrary conjunctive queries (that are not necessarily Boolean or full)
by using the notion of {\em free-connex} tree decompositions.
\end{itemize}
\begin{remark}
    Since in Eq.~\eqref{eq:DC-satisfaction}, we use logarithms base $N$ where $N$ is the size of the input database instance $\calD$,
    our runtimes later will be stated in terms of $N^{\subw(Q,\Delta, \bm n)}$.
    This is more consistent with the original definition of the submodular width~\cite{DBLP:journals/jacm/Marx13}.
    However, it deviates from~\cite{theoretics:13722} where logarithms were base $2$, hence runtimes were stated in terms of $2^{\subw(Q,\Delta, \bm n)}$.
    \label{rmk:log-base}
\end{remark}

Given a conjunctive query $Q$ with variables $\bm V$ and tree decompositions $\calT(Q)$,
a {\em bag selector} $\calS$ is a subset of $2^{\bm V}$ that
consists of a bag $\chi(t)$ out of every tree decomposition $(T, \chi)\in\calT(Q)$.
Let $\bs(Q)$ denote the set of all bag selectors of $Q$.
By distributing the min over the inner max in Eq.~\eqref{eq:subw}, we can rewrite the submodular width as follows:
\begin{align}
\subw(Q,\Delta, \bm n) &\quad=\quad
    \max_{\substack{\text{Polymatroid $\bm h$}\\\bm h\models(\Delta, \bm n)}}\quad
    \max_{\calS \in \bs(Q)} \quad
    \min_{\bm Z \in \calS}\quad
    h(\bm Z)
    \label{eq:subw:bs}\\
    &\quad=\quad
    \max_{\calS \in \bs(Q)} \quad
    \max_{\substack{\text{Polymatroid $\bm h$}\\\bm h\models(\Delta, \bm n)}}\quad
    \min_{\bm Z \in \calS}\quad
    h(\bm Z)
    \nonumber
\end{align}

\paragraph*{\bf Constant-delay enumeration}
Given a query $Q$ and a database instance $\calD$ that includes the schema of $Q$,
{\em constant-delay enumeration (CDE)} refers to enumerating tuples in the output $Q(\calD)$ one by one without repetitions while keeping the {\em delay} constant. The delay is the maximum of three times: the time to output the first tuple, the time between any two consecutive tuples, and the time after outputting the last tuple until the end of enumeration. See~\cite{10.1145/2783888.2783894} for a survey.
\section{Truncation Lemma}
\label{sec:truncation-lemma}

Before describing the $\primalalgo$ algorithm, we present a key technical lemma that is at the heart of this algorithm's analysis.
In essence, this lemma converts a monotone
set function $\bm g$ to a polymatroid $\bm h$.
The polymatroid $\bm h$ is constructed
 by taking the lowest value $c$
at which $\bm g$ violates submodularity,
and truncating the function $\bm g$ 
above $c$, so that $c$ becomes the maximum value after truncation.
The truncated $\bm h$ no longer violates submodularity, hence is a polymatroid.
But now, we can utilize $\bm h$ in the context of the submodular width as follows.
By construction, the polymatroid~$\bm h$ has a maximum value of $c$, and, in turn, the submodular width is defined as a maximum
over polymatroids.
Hence, the lemma shows that $c$ is upper bounded by the submodular width, under a certain condition.

Given a query $Q$ over variable set $\bm V$ and a tree decomposition $(T,\chi)\in\calT(Q)$,
a set $\calI\subseteq 2^{\bm V}$ is said to \emph{cover} the tree decomposition $(T,\chi)$
if every bag $\chi(t)$ of $(T,\chi)$ is in $\calI$.
\begin{restatable}[Truncation lemma]{lemma}{TruncationLemma}
    \label{lmm:primal-feasible}
    Let $\bm g:2^{\bm V}\rightarrow \R_+\cup\{\infty\}$ be a monotone set function satisfying $g(\emptyset)=0$.
    Define 
    the following:\footnote{
    Note that $f$ can be undefined, for example when
     $g(\bm X) = g(\bm X \cap \bm Y) = \infty$.
     We follow the convention that the minimum over an empty set is $\infty$, hence $c \in \R_+\cup\{\infty\}$.}
    \begin{align}
        f(\bm X, \bm Y) &\quad\defeq\quad g(\bm X) + g(\bm Y) - g(\bm X \cap \bm Y),&\forall \bm X, \bm Y\subseteq \bm V\label{eq:f(X,Y)}\\
        c &\quad\defeq\quad \min_{\substack{\bm X, \bm Y\subseteq \bm V\\g(\bm X \cup \bm Y)>f(\bm X, \bm Y)}} f(\bm X, \bm Y)\label{eq:c}\\
        h(\bm X) &\quad\defeq\quad \min(g(\bm X), c), &\forall \bm X\subseteq \bm V\label{eq:h}\\
        \calI &\quad\defeq\quad \setof{\bm X \subseteq \bm V}{g(\bm X) \leq c}\label{eq:I}
    \end{align}
    Then, the following claims hold:
    \begin{enumerate}
        \item The function $\bm h$ is a polymatroid that satisfies $\bm h \leq \bm g$ and $\bm h \leq c$.\label{lmm:primal-feasible:1}
        \item Let $Q$ be a CQ with variable set $\bm V$ and let $(\Delta, \bm n)$
        be a collection of statistics over $\bm V$ such that $\bm g\models(\Delta, \bm n)$.
        Then, $\bm h\models(\Delta, \bm n)$.
        In addition, if $\calI$ does not cover any tree decomposition of $Q$,
        then $c \leq \subw(Q,\Delta, \bm n)$.\label{lmm:primal-feasible:2}
    \end{enumerate}
\end{restatable}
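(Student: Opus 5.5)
The plan is to verify the polymatroid axioms for $\bm h=\min(\bm g,c)$ directly from the definition of $c$ as the smallest violating value of $f$. Then I check the statistics by a two-case argument, and finally feed $\bm h$ into the bag-selector form~\eqref{eq:subw:bs} of the submodular width.

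\emph{Claim~\ref{lmm:primal-feasible:1}.}
The facts $\bm h\le\bm g$ and $\bm h\le c$ are immediate from~\eqref{eq:h}. We have $h(\emptyset)=\min(0,c)=0$. Monotonicity holds because a pointwise minimum of the monotone $\bm g$ and a constant is monotone.

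For submodularity, fix $\bm X,\bm Y$ and split into two cases.
\begin{itemize}
\item[(a)] Suppose $h(\bm X)=c$ (the case $h(\bm Y)=c$ is symmetric). Then $h(\bm X\cup\bm Y)\le c=h(\bm X)$, and $h(\bm X\cap\bm Y)\le h(\bm Y)$ by monotonicity. Adding these two inequalities gives the submodularity inequality.
\item[(b)] Otherwise $g(\bm X)<c$ and $g(\bm Y)<c$. Then $h$ agrees with $g$ on $\bm X$ and $\bm Y$, and also on $\bm X\cap\bm Y$ by monotonicity. All three values are finite, so $f(\bm X,\bm Y)$ is well defined, and it suffices to show $h(\bm X\cup\bm Y)\le f(\bm X,\bm Y)$.
  \begin{itemize}
  \item[(b1)] If $g(\bm X\cup\bm Y)\le f(\bm X,\bm Y)$, this follows from $h\le g$.
  \item[(b2)] Otherwise the pair $(\bm X,\bm Y)$ is one of the violating pairs in~\eqref{eq:c}. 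Hence $c\le f(\bm X,\bm Y)$, and $h(\bm X\cup\bm Y)\le c$ finishes the case.
  \end{itemize}
\end{itemize}

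The one delicate point, which I expect to be the main obstacle, is real-valuedness when $c=\infty$. Then $\bm h=\bm g$, and $\bm g$ could in principle take the value $\infty$ (e.g.\ $g(\emptyset)=0$ and $g=\infty$ elsewhere has no violating pair). I would handle this by working with extended-real polymatroids, since the inequalities above hold verbatim. Alternatively, I would note that in the intended application $\bm g$ is finite, because it comes from logs of relation sizes, so $\bm h$ is finite. I would state this convention explicitly before the proof.

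\emph{Claim~\ref{lmm:primal-feasible:2}.}
Take $\bm Y|\bm X\in\Delta$ and again split into two cases.
\begin{itemize}
\item If $g(\bm X)\ge c$, then $h(\bm X)=c$ and, by monotonicity, $h(\bm X\cup\bm Y)=c$. So $h(\bm Y|\bm X)=0\le n_{\bm Y|\bm X}$.
\item If $g(\bm X)<c$, then $h(\bm X)=g(\bm X)$ and $h(\bm X\cup\bm Y)\le g(\bm X\cup\bm Y)$. So $h(\bm Y|\bm X)\le g(\bm Y|\bm X)\le n_{\bm Y|\bm X}$, using $\bm g\models(\Delta,\bm n)$; finiteness of $g(\bm X\cup\bm Y)$ is guaranteed by that hypothesis.
\end{itemize}
Thus $\bm h\models(\Delta,\bm n)$.

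Finally, suppose $\calI$ covers no $(T,\chi)\in\calT(Q)$. Then every tree decomposition has a bag $\chi(t)$ with $g(\chi(t))>c$, and hence $h(\chi(t))=c$. Choosing one such bag from each decomposition yields a bag selector $\calS\in\bs(Q)$ with $\min_{\bm Z\in\calS}h(\bm Z)=c$. Since $\bm h$ is a polymatroid satisfying $(\Delta,\bm n)$, the rewriting~\eqref{eq:subw:bs} gives $\subw(Q,\Delta,\bm n)\ge c$.
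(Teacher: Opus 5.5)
Your proof is correct and follows essentially the same route as the paper: the same truncation case analysis for submodularity (your case (a) merges the paper's Cases 2--4 into one), the same argument that truncating at $c$ cannot increase $h(\bm X\cup\bm Y)-h(\bm X)$, giving $\bm h\models(\Delta,\bm n)$, and the same bag-selector argument for $c\le\subw(Q,\Delta,\bm n)$. You are right that the paper glosses over $c=\infty$, but your fallback claim that $\bm g$ is finite in the application is wrong, since the algorithm sets $g(\bm X)=\infty$ whenever no $R(\bm X)\in\calD$; the correct observation is that the hypothesis of Claim~(2) forces $c<\infty$ (because $\calT(Q)\neq\emptyset$, while $c=\infty$ would give $\calI=2^{\bm V}$), and the algorithm establishes $c<\infty$ before invoking the lemma, so $\bm h\le c$ is real-valued wherever it is actually used.
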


\begin{proof}
    First, we prove Claim~(\ref{lmm:primal-feasible:1}).
    It is straightforward to verify that the function $\bm h$ 
    always takes on values in $\R_+$ and that $\bm h$ is upper bounded by both $\bm g$ and $c$.    
    Since $\bm g$ is monotone and satisfies $g(\emptyset)=0$,
    the function $\bm h$ is also monotone and satisfies $h(\emptyset)=0$.
    It remains to show that $\bm h$ is submodular, that is, for every $\bm X, \bm Y\subseteq \bm V$, we have $h(\bm X\cup\bm Y) \leq h(\bm X) + h(\bm Y)- h(\bm X \cap \bm Y)$.
    We recognize the following cases:
    \begin{itemize}
        \item Case 1: $g(\bm X), g(\bm Y) \leq c$.
        This implies that $h(\bm X) = g(\bm X)$, $h(\bm Y) = g(\bm Y)$, and $h(\bm X \cap \bm Y)=g(\bm X \cap \bm Y)$.
        If $g(\bm X \cup \bm Y) \leq f(\bm X,\bm Y)$, then
        $h(\bm X \cup \bm Y) \leq g(\bm X \cup \bm Y) \leq f(\bm X,\bm Y)$ and we are done.
        Otherwise, by definition of $c$, we have $h(\bm X \cup \bm Y) = c \leq f(\bm X, \bm Y)$.
        \item Case 2: $g(\bm X) \leq c$ and $g(\bm Y) > c$.
        This implies that $g(\bm X \cap \bm Y) \leq c$ and $g(\bm X\cup \bm Y) > c$.
        Therefore, $h(\bm X \cup \bm Y) \leq h(\bm X) + h(\bm Y) - h(\bm X \cap \bm Y)$
        holds because $h(\bm X \cup \bm Y) = h(\bm Y) = c$ and monotonicity
        implies that $h(\bm X \cap \bm Y) \leq h(\bm X)$.
        \item Case 3: $g(\bm Y) \leq c$ and $g(\bm X) > c$. Symmetric to Case 2.
        \item Case 4: $g(\bm X), g(\bm Y) > c$. This implies that $g(\bm X \cup \bm Y) > c$
        and
        $h(\bm X \cup \bm Y) = h(\bm X) = h(\bm Y) = c$.
        Since $h(\bm X\cap \bm Y) \leq c$ by definition of $\bm h$, the submodularity holds.
    \end{itemize}
    Now, we prove Claim~(\ref{lmm:primal-feasible:2}).
    We start by showing that $\bm h\models (\Delta, \bm n)$.
    For every $\bm X, \bm Y \subseteq \bm V$, we have
    \begin{align*}
        h(\bm Y|\bm X) &\defeq h(\bm Y\cup \bm X) - h(\bm X)\\
        &=\min(g(\bm X\cup \bm Y), c) - \min(g(\bm X), c)\\
        &\leq g(\bm Y\cup \bm X) - g(\bm X)&\text{(By monotonicity of $\bm g$)}\\
        &\leq n_{\bm Y|\bm X}&\text{(Because $\bm g\models(\Delta, \bm n)$)}
    \end{align*}
    This proves that $\bm h\models(\Delta, \bm n)$.
    Now we prove that $c \leq \subw(Q,\Delta, \bm n)$.
    If $\calI$ does not contain all the bags of any tree decomposition in $\calT(Q)$,
    then the complement of $\calI$, denoted $\ov{\calI}\defeq 2^{\bm V}\setminus \calI$,
    must contain at least one bag of every tree decomposition in $\calT(Q)$.
    Hence, $\ov{\calI}$ must be a superset of some {\em bag selector} $\calS^* \in \bs(Q)$.
    By Eq.~\eqref{eq:subw:bs}, we have:
    \begin{align*}
        \subw(Q,\Delta, \bm n) &=
            \max_{\substack{\text{Polymatroid }\bm h\\\bm h\models(\Delta, \bm n)}}\quad
            \max_{\calS \in \bs(Q)} \quad
            \min_{\bm Z \in \calS}\quad
            h(\bm Z)\\
            &\geq
            \max_{\calS \in \bs(Q)} \quad
            \min_{\bm Z \in \calS}\quad
            h(\bm Z)&\text{(By choosing $\bm h$ from Eq.~\eqref{eq:h})}\\
            &\geq
            \min_{\bm Z \in \calS^*}\quad
            h(\bm Z)&\text{(By choosing $\calS^*$)}\\
            &=c&\text{(because $h(\bm Z) = c$ for every $\bm Z\in \ov{\calI}\supseteq \calS^*$)}
    \end{align*}
\end{proof}
\revision{\section{Warmup Example}
\label{sec:example}

Before presenting the full \primalalgo algorithm in Section~\ref{sec:algorithm}, we start in this section with a warmup example that demonstrates the main ideas behind the algorithm.}
Consider the following CQ $Q_\square$:
\begin{equation}
\label{eq:4cycle}
Q_\square(X, Y, Z, W) \cd R(X,Y), S(Y,Z), T(Z,W), U(W,X).
\end{equation}
The query $Q_\square$ is known to have a {\em fractional hypertree width}~\cite{DBLP:conf/soda/GroheM06} of 2
and a submodular width of 3/2~\cite{DBLP:journals/algorithmica/AlonYZ97,DBLP:conf/pods/Khamis0S17}.
In addition to the trivial tree decomposition that puts all variables in one bag, $Q_\square$ has two free-connex tree decompositions:
\begin{itemize}
    \item $\calT_1$ with bags $\{X,Y,Z\}$ and $\{Z,W,X\}$, and
    \item $\calT_2$ with bags $\{W,X,Y\}$ and $\{Y,Z,W\}$.
\end{itemize}
The query and both tree decompositions are depicted in Figure~\ref{fig:4cycle}.
Let $N$ be an {\em even} number.
Consider the following input database instance $\calD_\square$, where $[N]$ is a shorthand for $\{1, 2, \ldots, N\}$:
\begin{align*}
    R = S = T = U = ([N/2]\times \{1\}) \cup (\{1\}\times [N/2])
\end{align*}
Note that no matter which tree decomposition we choose,
there is always a bag whose size is $\Omega(N^2)$ over this database instance.
For example, the bag $\{X, Y, Z\}$ of $\calT_1$
requires computing the join $R(X,Y) \Join S(Y,Z)$, which has size $\Omega(N^2)$.
This is consistent with the fact that the fractional hypertree width of $Q_\square$ is 2:
Using a single tree decomposition, we cannot beat the $\Omega(N^2)$ runtime in the worst-case.
We show below how the \primalalgo algorithm can achieve a runtime of $O(N)$
on the specific input database instance $\calD_\square$ by utilizing both tree decompositions $\calT_1$ and $\calT_2$,
thus mimicking the algorithm from~\cite{DBLP:journals/algorithmica/AlonYZ97}.
We focus on this database instance for the purpose of illustrating our algorithm; note that the general time bound achieved by our algorithm on the query $Q_\square$ is $O(N^{3/2 + \epsilon})$, as this query's submodular width is $3/2$.

At a high level, the \primalalgo algorithm starts by initializing a monotone set function $\bm g:2^{\bm V}\to\R_+\cup\{\infty\}$ where $\bm V :=\{X, Y, Z, W\}$ in this example
using the sizes of the input relations and their projections on $\log_N$-scale.
In particular, the function $\bm g$ is initialized as follows:
\begin{align*}
    g(XY) &= g(YZ) = g(ZW) = g(WX) = \log_N N = 1,\\
    g(X) &= g(Y) = g(Z) = g(W) = \log_N N = 1,\\
    g(\emptyset) &= \log_N 1 = 0,\\
    g(\bm X) &= \infty \quad\quad\text{ otherwise}.
\end{align*}
Now, we find the ``smallest'' violation of submodularity by $\bm g$.
Specifically, we look for a violated submodularity
$g(\bm X\cup\bm Y) > f(\bm X, \bm Y) \defeq g(\bm X) + g(\bm Y) - g(\bm X \cap \bm Y)$ that minimizes $f(\bm X, \bm Y)$, in the same spirit as in Lemma~\ref{lmm:primal-feasible}.
Breaking ties arbitrarily, we 
could pick $g(XYZ) = \infty > g(XY) + g(YZ) - g(Y) = 1$.
In order to fix this submodularity, we need to lower
the value of $g(XYZ)$ from $\infty$ down to 1 (or less).
To translate this change in the function $\bm g$ into the database world, 
we need to be able compute a relation over $\{X, Y, Z\}$ of size at most $N^1=N$.
Note that $g(Z|Y) \defeq g(YZ) - g(Y) = 0$.
We interpret this as saying that
we can only process a $Y$-value in $S$ if it has at most one matching $Z$-value.
We call such $Y$-values \emph{light}, and the rest of the $Y$-values \emph{heavy}.
Now, the execution of the algorithm branches into two corresponding branches:
\begin{itemize}
    \item On the {\em light}-$Y$ branch, we compute the join $R(X, Y) \Join S(Y, Z)$
    but {\em only} over $Y$-values in $S$ that are {\em light}, i.e., have at most one matching $Z$-value in $S$.
    The size of this join is at most $N$,
    thus allowing us to lower the value of $g(XYZ)$ down to $\log_N N = 1$.
    Thanks to this change, the submodularity is fixed $g(XYZ)=1 \leq g(XY) + g(YZ) - g(Y) = 1$.
    Now we can
    look for the next ``smallest'' violation among the remaining submodularities.
    Breaking ties arbitrarily again, suppose now we pick $g(ZWX) = \infty > g(ZW) + g(WX) - g(W) = 1$.
    Just like before, now we partition the $W$-values into light and heavy based on the number of matching $X$-values in $U(W, X)$, and we create two corresponding branches:
    \begin{itemize}
        \item On the light-$W$ branch, we compute the join $T(Z, W)\Join U(W, X)$ over only the light $W$-values, hence this join has size at most $N$.
        After we compute this join, we now have two relations over $\{X, Y, Z\}$ and $\{Z, W, X\}$ corresponding to the two bags of $\calT_1$.
        We can enumerate the output of $Q_\square$ by running Yannakakis algorithm~\cite{DBLP:conf/vldb/Yannakakis81} over $\calT_1$ in time $O(\out)$,
        where $\out$ is the output size for this particular branch.
        \item On the heavy-$W$ branch, we cannot afford to compute the join $T(Z, W)\Join U(W, X)$ over the heavy $W$-values, thus we are not allowed to lower the value of $g(ZWX)$ down to 1.
        Instead, we take advantage of the fact that the number of heavy $W$-values is very small, namely 1 in this particular database instance $\calD_\square$.
        This allows us to lower the value of $g(W)$ down to $\log_N 1= 0$.
        Note that this change does not really fix the submodularity $g(ZWX) = \infty > g(ZW) + g(WX) - g(W) = 2$.
        Nevertheless, we can now look for the ``smallest'' submodularity violation in the resulting $\bm g$.
        Let's say we pick $g(YZW) = \infty \geq g(YZ) + g(W) = 1$.
        To fix this submodularity, we need to lower the value of $g(YZW)$ down to 1 (or less).
        In the database world, we need to be able to mirror this change by computing a relation over $\{Y, Z, W\}$ of size at most $N^1=N$.
        Luckily, we can compute this directly by expanding tuples in $S(Y, Z)$
        with the single heavy $W$-value.
        Now we look for another submodularity violation, let's say we pick $g(WXY) = \infty > g(W) + g(XY) = 1$.
        Just like before, we need to lower the value of $g(WXY)$ down to 1, which corresponds to computing a relation over $\{W, X, Y\}$ of size at most $N^1$,
        which is again easily doable by expanding $R(X, Y)$ with the heavy $W$-value.
        Now, we have two relations over $\{W, X, Y\}$ and $\{Y, Z, W\}$ corresponding to the two bags of $\calT_2$.
        We run Yannakakis algorithm over $\calT_2$.
    \end{itemize}
    \item On the {\em heavy} $Y$-branch, we {\em cannot} afford to compute the join $R(X, Y) \Join S(Y, Z)$ over the heavy $Y$-values, thus we are {\em not} allowed to lower
    the value of $g(XYZ)$ down to 1.
    On this branch, we cannot make progress by fixing the violated submodularity
    $g(XYZ) = \infty > g(XY) + g(YZ) - g(Y) = 1$.
    Instead, we will have to make progress by utilizing something else:
    The number of heavy $Y$-values is very small, namely 1 in this example.
    Instead of lowering the value of $g(XYZ)$ as we did in the light-$Y$ branch,
    here we will lower the value of $g(Y)$ down to $\log_N 1 = 0$, which corresponds to processing the single heavy $Y$-value.
    Note that the submodularity will remain violated
    $g(XYZ)=\infty  > g(XY) + g(YZ) - g(Y) = 2$.
    However, now we can pick another submodularity to fix, for example $g(YZW) > g(Y) + g(ZW) = 1$.
    To fix this submodularity, we need to lower the value of $g(YZW)$ down to 1, which corresponds to computing a relation over $\{Y, Z, W\}$ of size at most $N^1$.
    This in turn is easily doable by expanding $T(Z, W)$ with the single heavy $Y$-value.
    After this change, we can pick another submodularity violation, for example $g(WXY) > g(WX) + g(Y) = 1$.
    This is also easily doable by expanding $U(W, X)$ with the single heavy $Y$-value.
    Now we have two relations over $\{W, X, Y\}$ and $\{Y, Z, W\}$ corresponding to the two bags of $\calT_2$, and we can run Yannakakis algorithm over $\calT_2$.
\end{itemize}
In all branches above, the \primalalgo algorithm reports the output of $Q_\square$ in time $O(N + \out)$ over the input database instance $\calD_\square$, where $\out$ is the output size.
As mentioned before, over an arbitrary input database instance, the runtime of the \primalalgo algorithm on $Q_\square$ is $O(N^{3/2 + \epsilon} + \out)$.

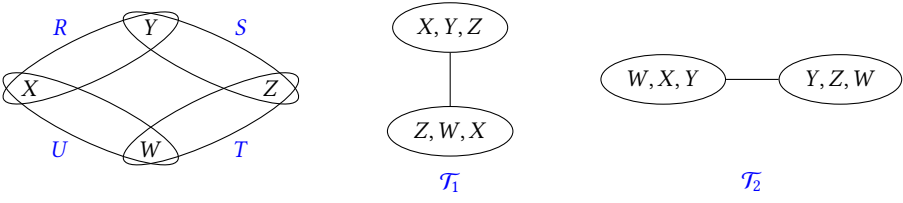
\begin{figure}[t]
    \centering
    \begin{tikzpicture}[scale = .4, every node/.style={scale=0.9}]
        \node[] at (0,0) (X) {$X$};
        \node[] at (8,0) (Z) {$Z$};
        \node[] at (4,2) (Y) {$Y$};
        \node[] at (4,-2) (W) {$W$};
        \draw[rotate around={+25:(2, +1)}, ] (2, +1) ellipse (3.2 and .75)
        node[shift={(-.45cm,+.45cm)},blue]{$R$};
        \draw[rotate around={-25:(6, +1)}, ] (6, +1) ellipse (3.2 and .75)
        node[shift={(+.45cm,+.45cm)},blue]{$S$};
        \draw[rotate around={+25:(6, -1)}, ] (6, -1) ellipse (3.2 and .75)
        node[shift={(+.45cm,-.45cm)},blue]{$T$};
        \draw[rotate around={-25:(2, -1)}, ] (2, -1) ellipse (3.2 and .75)
        node[shift={(-.45cm,-.45cm)},blue]{$U$};

        \node[draw,ellipse, right = 3 of Y] (xyz) {$X, Y, Z$};
        \node[draw,ellipse, below = 0.7 of xyz] (zwx) {$Z, W, X$};
        \draw (xyz) -- (zwx);
        \node[below = .1 of zwx] {\color{blue}$\calT_1$};

        \coordinate (TD1) at ($(xyz)!0.5!(zwx)$);
        \node[draw,ellipse, right = 2 of TD1] (wxy) {$W, X, Y$};
        \node[draw,ellipse, right = 0.7 of wxy] (yzw) {$Y, Z, W$};
        \draw (wxy) -- (yzw);
        \coordinate (TD2) at ($(wxy)!0.5!(yzw)$);
        \node[below = 1.1 of TD2] {\color{blue}$\calT_2$};
    \end{tikzpicture}
\caption{Query $Q_\square$ from Eq.~\eqref{eq:4cycle} along with the two free-connex tree decompositions.}
\label{fig:4cycle}
\end{figure}

\section{The Algorithm}
\label{sec:algorithm}

We now present the \primalalgo algorithm for evaluating a CQ $Q$ over a database instance $\calD$.

\subsection{Parameters, Inputs, and Outputs}

The $\primalalgo$ algorithm is depicted in Algorithm~\ref{algo:ppanda}.
It is parameterized by the following:
\begin{itemize}
    \item A constant $\epsilon > 0$.
    \item A CQ $Q$ (Eq.~\eqref{eq:cq}) that has variables $\bm V$ and free variables $\bm F\subseteq \bm V$.
\end{itemize}
\primalalgo takes the following inputs:
\begin{itemize}
    \item A collection of statistics $(\Delta, \bm n)$ over $\bm V$.
    \item A database instance $\calD$, that {\em includes the schema} of $Q$ (Section~\ref{sec:notation}) and satisfies the statistics $(\Delta, \bm n)$, that is, $\calD\models(\Delta, \bm n)$.
     Without loss of generality, we assume that $\calD$ is {\em signature-unique}.\footnote{
     If there are multiple relations $R(\bm X)$ over the same set of variables $\bm X$, we can
     replace them with  their intersection.}
     \revision{Moreover, we assume that $\calD$ is over the same set $\bm V$ of variables of $Q$, i.e., for every $R(\bm X)\in\calD$, we assume $\bm X \subseteq \bm V$.}
    \item {\it[Optional argument]} A set function $\bm g:2^{\bm V}\to \R_+\cup \{\infty\}$ that satisfies the following invariants:
    
    \begin{itemize}
        \item {\bf Cardinality Invariant:} For every $\bm X \subseteq \bm V$, we must have:
        \begin{align*}
            {\bm g}(\bm X) &\geq \log_N |R(\bm X)| &\text{ if } R(\bm X)\in\calD,\\
            {\bm g}(\bm X) &=\infty &\text{ otherwise.}
        \end{align*}
        \item {\bf Monotonicity Invariant:}
            $\bm g$ is monotone and satisfies ${\bm g}(\emptyset)=0$.
        \item {\bf Statistics Invariant:} $\bm g\models(\Delta, \bm n)$.
    \end{itemize}
\end{itemize}
\primalalgo is a recursive algorithm, where $\epsilon$, $Q$ and the statistics $(\Delta, \bm n)$
remain {\em the same} across all recursive calls, whereas the database instance $\calD$
and the function $\bm g$ change from one recursive call to another.
The top-level call to \primalalgo is made with a database instance $\calD_0$ that {\em matches the schema} of $Q$ (Section~\ref{sec:notation}), and this top-level call returns the answer of $Q$ over $\calD_0$, denoted $Q(\calD_0)$.
Every recursive call to $\primalalgo$ is made with a database instance~$\calD$, which is an {\em augmented} version of $\calD_0$ (Section~\ref{sec:notation}).
In particular, such an instance $\calD$ always includes the schema of $Q$, but does not necessarily match the schema of $Q$.
To every recursive call to \primalalgo over a database instance~$\calD$, we associate a CQ, $Q_\calD$, whose body is the join of {\em all} relations in the current database instance $\calD$, and whose free variables are $\bm F$, the free variables of the original query $Q$:
\begin{align}
    Q_{\calD}(\bm F) &\cd  \bigwedge_{R(\bm X)\in \calD} R(\bm X).
    \label{eq:current-cq}
\end{align}
We will show later that every recursive call to $\primalalgo$ on a database instance $\calD$
returns a set $\calA$ of tuples over the variables $\bm F$ that satisfies the following:\footnote{It might be tempting to claim that every recursive call returns $\calA := Q_{\calD}(\calD)$, but we will show later that this is not the case, and there is a fundamental reason for that.
In particular, if this was the case, then we could have solved \scq queries in time $O(N^{\subw(Q)+\epsilon})$, but counting in submodular width time remains a hard open problem~\cite{10.1145/3426865} and no prior algorithm seems to solve it~\cite{DBLP:journals/jacm/Marx13,DBLP:conf/pods/Khamis0S17,theoretics:13722,berkholz_et_al:LIPIcs.MFCS.2019.58}.}
\begin{align}
    Q_\calD(\calD) \quad\subseteq\quad \calA \quad\subseteq\quad Q(\calD_0)
    \label{eq:correctness-goal}
\end{align}
Throughout the algorithm, we use $N$ to refer to the size of the {\em original} input database instance $\calD_0$: $N\defeq |\calD_0|$.
In particular, the value of $N$ remains {\em the same} across all recursive calls to \primalalgo.

\subsection{Main Algorithm Description} 
%

In this section, we describe the steps of the \primalalgo algorithm which are depicted in Algorithm~\ref{algo:ppanda}.
As mentioned before, \primalalgo allows a set function 
$\bm g$ as an optional input argument.  In the case that this argument is not provided, the algorithm  initializes $\bm g$ as follows:
\begin{align}
g(\bm X) \defeq
    \begin{cases}
        \log_N |R(\bm X)| & \text{if } R(\bm X) \in \calD \\
        \infty & \text{otherwise}
    \end{cases}
    \label{eq:initial-g}
\end{align}
The above $\bm g$ satisfies the cardinality invariant but not necessarily the other two invariants (the monotonicity and statistics invariants).
In order to enforce these two invariants (while maintaining the first),
\primalalgo invokes a special subroutine, called \calibrate.
This subroutine replaces both $\calD$ and $\bm g$ with new database instance $\calD'$ and set function $\bm g'$ that satisfy all three invariants.

The \primalalgo algorithm then terminates if there is a tree decomposition $(T, \chi)\in \calT(Q)$ that is {\em covered by}~\footnote{\revision{Recall that the current database instance $\calD$ is always assumed to be over the same set of variables $\bm V$ as the query $Q$, i.e., for every $R(\bm X)\in\calD$, we have $\bm X \subseteq \bm V$.}} the current database instance $\calD$,
which means\footnote{This is the same as saying $(T,\chi)$  is covered by the set $\setof{\bm X}{R(\bm X)\in\calD}$, as defined in Lemma~\ref{lmm:primal-feasible}.} that for every bag $\chi(t)$ of $(T, \chi)$, there is a relation $R(\chi(t))\in\calD$.
If this is the case, \primalalgo runs the Yannakakis algorithm~\cite{DBLP:conf/vldb/Yannakakis81} over those relations $R(\chi(t))$
corresponding to the bags of $(T, \chi)$---after semijoin reducing them with all other relations to remove spurious tuples---and returns the result.

If this termination condition is not satisfied yet,
then $\primalalgo$ computes the constant $c$ from Eq.~\eqref{eq:c} in Lemma~\ref{lmm:primal-feasible}
along with the sets $\bm X, \bm Y \subseteq \bm V$ that achieve the minimum in Eq.~\eqref{eq:c}.
Later on in the analysis,
we will rely on Lemma~\ref{lmm:primal-feasible} to show that
$c \leq \subw(Q, \Delta, \bm n)$.
Let $\bm W \defeq \bm X \cap \bm Y$ and $\theta\defeq N^{g(\bm Y)-g(\bm W)}$.
Now, the algorithm partitions the relation $R(\bm Y)$ into two parts, $R^h(\bm Y)$ and $R^\ell(\bm Y)$,
called the {\em heavy} and {\em light} parts, respectively, using a threshold $\theta\times N^\epsilon$ on the degree
$\deg_R(\bm Y|\bm W)$.
In particular,
\begin{align}
    R^h(\bm Y) &\quad\defeq\quad  \left\{\bm y \in R(\bm Y)\quad\mid\quad\deg_{R(\bm Y)}(\bm Y|\bm W = \pi_{\bm W}(\bm y)) > \theta \times N^\epsilon\right\}\\
    R^\ell(\bm Y) &\quad\defeq\quad  \left\{\bm y \in R(\bm Y)\quad\mid\quad\deg_{R(\bm Y)}(\bm Y|\bm W = \pi_{\bm W}(\bm y)) \leq \theta \times N^\epsilon\right\}
\end{align}
In the light part $R^\ell(\bm Y)$, every tuple $\bm w \in \pi_{\bm W}(R^\ell(\bm Y))$ has degree at most $\theta \times N^\epsilon$.
We partition $R^\ell(\bm Y)$ further into $\ceil{N^\epsilon}$ parts
$R^{\ell, 1}(\bm Y), \ldots, R^{\ell, \ceil{N^\epsilon}}(\bm Y)$ such that
in each part $R^{\ell, i}(\bm Y)$, every tuple $\bm w \in \pi_{\bm W}(R^{\ell, i}(\bm Y))$ has degree at most $\theta$.
This can be done by a simple partitioning procedure.

Now for each light part $R^{\ell, i}(\bm Y)$, we create a database instance $\calD^{\ell, i}$
by {\em augmenting} $\calD$ with $R(\bm X) \Join R^{\ell, i}(\bm Y)$ \;(Eq.~\eqref{eq:db-augmentation}).
We also define a new set function $\bm g^{\ell}$ obtained from $\bm g$ by updating $g(\bm X \cup\bm Y)$
to $c$ \;(Eq.~\eqref{eq:function-update}).
Note that 
\begin{align}
    |R(\bm X) \Join R^{\ell, i}(\bm Y)| \leq |R(\bm X)| \cdot \theta \leq N^{g(\bm X)} \cdot N^{g(\bm Y)-g(\bm W)} = N^{c} \quad\left(\leq N^{\subw(Q, \Delta, \bm n)}\right)
\end{align}
Hence, $\bm g^\ell$ satisfies the cardinality invariant with respect to $\calD^{\ell, i}$,
and we rely on the $\calibrate$ subroutine to update $\calD^{\ell, i}$ and $\bm g^{\ell}$ to satisfy all three invariants.
We call \primalalgo recursively with the latest $\calD^{\ell, i}$ and $\bm g^{\ell}$ as the new database instance and set function, respectively, and collect the results in a set of tuples $\calA$.

Finally, we create a database instance $\calD^h$
by augmenting $\calD$ with $\pi_{\bm W}(R^h(\bm Y))$,
and we recursively call \primalalgo on $\calD^h$ {\em without} providing the optional argument $\bm g$.
Instead, we rely on Eq.~\eqref{eq:initial-g} to initialize $\bm g$ in this recursive call.
We add the result to $\calA$ and return $\calA$.

\begin{algorithm}[ht!]
    \caption{$\primalalgo_{\epsilon,Q}((\Delta, \bm n),\calD, [\bm g])$}
    \label{algo:ppanda}
    \begin{algorithmic}[1]
        \If{$\bm g$ was not provided}
            \State Initialize $\bm g$ using Eq.~\eqref{eq:initial-g}.
            \algorithmiccomment{$\bm g$ satisfies the cardinality invariant}
            \label{alg:ppanda:initialize-g}
            \State $(\calD, \bm g)\gets \calibrate((\Delta, \bm n),\calD, \bm g)$
            \algorithmiccomment{Now, $\bm g$ satisfies all three invariants}
        \EndIf
        \If{there exists $(T, \chi)\in\calT(Q)$ that is {\em covered by} $\calD$}\label{alg:ppanda:terminate}
            \State Semijoin-reduce $R(\chi(t))$ for every $t\in\nodes(T)$ with all relations in $\calD$\label{alg:ppanda:semijoin}
            \State Evaluate $Q'_\calD(\bm F) \cd \bigwedge_{t \in \nodes(T)} R(\chi(t))$ using Yannakakis algorithm and {\bf return} the result 
            \label{alg:ppanda:yannakakis}
        \Else
            \State $(\bm X, \bm Y)\quad\gets\quad \argmin_{\substack{\bm X', \bm Y'\subseteq \bm V\\g(\bm X' \cup \bm Y')>f(\bm X', \bm Y')}} f(\bm X', \bm Y')$\algorithmiccomment{$f$ is given by Eq.~\eqref{eq:f(X,Y)}}
            \label{alg:ppanda:find-XY}
            \State $c\quad\gets\quad f(\bm X, \bm Y)$\algorithmiccomment{This is the same $c$ from Eq.~\eqref{eq:c}}\label{alg:ppanda:c}
            \State $\bm W \quad\gets\quad \bm X \cap \bm Y$
            \State $\theta\quad\gets\quad N^{g(\bm Y)-g(\bm W)}$ \label{alg:ppanda:theta}
            \State $(R^h(\bm Y), R^\ell(\bm Y)) \quad\gets\quad\heavylightpartition(R(\bm Y), \bm W, \theta \times N^\epsilon)$
            \algorithmiccomment{Partition $R(\bm Y)$ into heavy and light parts based on the degree
            $\deg_{R}(\bm Y|\bm W=\bm w)$ being above or below $\theta \times N^\epsilon$}
            \label{alg:ppanda:heavylight}
            \State $(R^{\ell, 1}(\bm Y), \ldots, R^{\ell, \ceil{N^\epsilon}}(\bm Y))\quad\gets\quad
            \equaldegreepartition(R^\ell(\bm Y), \bm W, \theta)$
            \algorithmiccomment{Parition $R^\ell(\bm Y)$ into $\ceil{N^\epsilon}$ parts
            where each part $R^{\ell, i}(\bm Y)$ has degree $\deg_{R^{\ell, i}(\bm Y)}(\bm Y|\bm W=\bm w)$ at most $\theta$}
            \label{alg:ppanda:equaldegree}
            \State $\calA \quad\gets\quad \emptyset$ \algorithmiccomment{A set of tuples to collect all answers}
            \For{$i=1$ to $\ceil{N^\epsilon}$}
                \State $R^{\ell, i}(\bm X \cup \bm Y) \quad\gets\quad R(\bm X) \Join R^{\ell, i}(\bm Y)$\label{alg:ppanda:join}
                \algorithmiccomment{$|R^{\ell, i}(\bm X \cup \bm Y)| \leq N^{g(\bm X)}\cdot \theta = N^{f(\bm X, \bm Y)}=N^c$}
                \State $\calD^{\ell, i}\quad\gets\quad \calD \uplus \{R^{\ell, i}(\bm X\cup\bm Y)\}$
                \State $g^{\ell}(\bm Z)\gets \bm g[(\bm X \cup \bm Y) \to c]$\algorithmiccomment{$\bm g^\ell$ satisfies the cardinality invariant w.r.t. $\calD^{\ell, i}$}
                \label{alg:ppanda:g_ell-initialize}
                \State $(\calD^{\ell, i}, \bm g^{\ell})\gets \calibrate((\Delta, \bm n),\calD^{\ell, i}, \bm g^{\ell})$\label{alg:ppanda:g_ell}\algorithmiccomment{Now, $\bm g^\ell$ satisfies all 3 invariants w.r.t. $\calD^{\ell, i}$}
                \label{alg:ppanda:g_ell-calibrate}
                \State $\calA\quad\gets\quad \calA \cup \primalalgo_{\epsilon,Q}((\Delta, \bm n),\calD^{\ell, i}, \bm g^{\ell})$ \label{alg:ppanda:light}
            \EndFor
            \State $\calD^h\quad\gets\quad \calD \uplus \{\pi_{\bm W}(R^h(\bm Y))\}$
            \State $\calA\quad\gets\quad \calA \cup \primalalgo_{\epsilon,Q}((\Delta, \bm n),\calD^h, \mathsf{NULL})$\algorithmiccomment{We do {\em not} pass $\bm g$ here}
            \label{alg:ppanda:heavy}
            \State\Return $\calA$
        \EndIf
    \end{algorithmic}
\end{algorithm}

\subsection{The \calibrate Subroutine}

We now describe the steps of the \calibrate subroutine which are depicted in Algorithm~\ref{algo:calibrate}.
\revision{The \calibrate subroutine is mainly needed to handle {\em general} statistics $(\Delta, \bm n)$.
In particular, if all the statistics in $(\Delta, \bm n)$ are {\em cardinality upper bounds}, i.e., have the form $\log_N \deg_{R}(\bm Y|\emptyset) \leq n_{\bm Y|\emptyset}$, then \calibrate simplifies to a much simpler procedure that only enforces the monotonicity invariant, and it does so by computing projections
of every relation that is added to the current database instance.}

\calibrate takes as input a collection of statistics $(\Delta, \bm n)$, a database instance $\ov\calD$ that satisfies the statistics $(\Delta, \bm n)$, and a set function $\ov{\bm g}$ that satisfies the cardinality invariant w.r.t.~$\ov\calD$.
\calibrate returns a new database instance $\calD$ (which is an {\em augmented} version of $\ov\calD$, as defined in Section~\ref{sec:notation}) and a new set function $\bm g\leq \ov{\bm g}$ that satisfies all three invariants w.r.t.~$\calD$: the cardinality, monotonicity, and statistics invariants.

\calibrate starts by initializing its outputs $\calD$ and $\bm g$ to be copies of the inputs $\ov\calD$ and $\ov{\bm g}$.
Then,
\calibrate updates $g(\emptyset)$ to $0$ (assuming it was not $0$ already)
and augments $\calD$ with a nullary relation $R()$ consisting of a single tuple $()$, to ensure the cardinality invariant
$g(\emptyset) \geq \log_N |R()|$.
Afterwards, \calibrate repeatedly applies the following two steps until neither applies:
\begin{itemize}
    \item If the monotonicity invariant is violated, that is, there exist $\bm X \subset \bm Y \subseteq \bm V$ such that $g(\bm X) > g(\bm Y)$,
    then \calibrate updates $g(\bm X)$ to $g(\bm Y)$ and augments $\calD$ with the relation $\pi_{\bm X}(R(\bm Y))$.
    This fixes the violated monotonicity while maintaining the cardinality invariant.
    \item If the statistics invariant is violated, that is, there exists a statistics $(\bm Y|\bm X)\in\Delta$ such that
    $g(\bm X \cup \bm Y) - g(\bm X) > n_{\bm Y|\bm X}$,
    then \calibrate updates $g(\bm X\cup \bm Y)$ to become $g(\bm X) + n_{\bm Y|\bm X}$.
    In order to maintain the cardinality invariant, we need to augment $\calD$ with a relation $R(\bm X \cup \bm Y)$
    such that $|R(\bm X \cup \bm Y)| \leq N^{g(\bm X)} \cdot N^{n_{\bm Y|\bm X}}$.
    To that end, we take the {\em guard} relation $R(\bm Z)$ of the statistics $\bm Y|\bm X$ (Section~\ref{sec:notation}) and set $R(\bm X \cup \bm Y) \gets R(\bm X) \Join \pi_{\bm X\cup \bm Y}(R(\bm Z))$.
    By definition of guard relation, $\deg_{R(\bm Z)}(\bm Y|\bm X) \leq N^{n_{\bm Y|\bm X}}$.
    Moreover, by the cardinality invariant, $|R(\bm X)| \leq N^{g(\bm X)}$.
    Therefore, $|R(\bm X \cup \bm Y)|$ meets the desired bound.
\end{itemize}

\begin{algorithm}[ht!]
    \caption{$\calibrate((\Delta, \bm n), \ov\calD, \ov{\bm g})$}
    \label{algo:calibrate}
    \begin{algorithmic}[1]
        \State $(\calD, \bm g) \gets (\ov{\calD}, \ov{\bm g})$\algorithmiccomment{Initialize the outputs to the inputs}
        \State $g(\emptyset)\gets 0$\algorithmiccomment{Fix the monotonicity invariant}
        \State $R() \gets \{()\}$\algorithmiccomment{$R()$ is a nullary relation with a single tuple}
        \State $\calD \gets \calD \uplus \{R()\}$\algorithmiccomment{Re-inforce the cardinality invariant}
        \While{\true}
            \If{$\exists \bm X \subset\bm Y\subseteq \bm V$ where $g(\bm X) > g(\bm Y)$}\algorithmiccomment{If the monotonicity invariant is violated}
                \State $g(\bm X) \gets g(\bm Y)$\algorithmiccomment{Fix the monotonicity invariant}
                \State $\calD \gets \calD \uplus \{\pi_{\bm X}(R(\bm Y))\}$
                \algorithmiccomment{Re-inforce the cardinality invariant}
            \ElsIf{$\exists (\bm Y|\bm X)\in\Delta$ where $g(\bm X \cup\bm Y) - g(\bm X) > n_{\bm Y|\bm X}$}\algorithmiccomment{If the statistics invariant is violated}
                \State $g(\bm X\cup\bm Y) \gets g(\bm X) + n_{\bm Y|\bm X}$\algorithmiccomment{Fix the statistics invariant}
                \State Let $R(\bm Z)$ be the {\em guard} of $\bm Y|\bm X$
                \algorithmiccomment{$\bm Z\supseteq \bm X\cup\bm Y$ and $\log_N\deg_{R(\bm Z)}(\bm Y|\bm X) \leq n_{\bm Y|\bm X}$}
                \State $R(\bm X\cup\bm Y)\gets R(\bm X)\Join \pi_{\bm X\cup \bm Y}(R(\bm Z))$
                \algorithmiccomment{$|R(\bm X \cup \bm Y)|\leq N^{g(\bm X)} \cdot N^{n_{\bm Y|\bm X}}$}
                \State $\calD \gets \calD \uplus \{R(\bm X\cup\bm Y)\}$\algorithmiccomment{Re-inforce the cardinality invariant}
            \Else\algorithmiccomment{If no invariants are violated}
                \State \textbf{break}
            \EndIf
        \EndWhile
        \State \Return $(\calD, \bm g)$
    \end{algorithmic}
\end{algorithm}


\section{Algorithm Analysis}
\label{sec:analysis}

Our main result is the following theorem, which says that $\primalalgo$
runs in submodular width time, with an extra factor of $O(N^\epsilon)$
where $\epsilon > 0$ can be arbitrarily small.
Note that the definition of the submodular width used below {\em generalizes}
the one by Daniel Marx~\cite{DBLP:journals/jacm/Marx13} by accepting
{\em arbitrary degree constraints} (not just upper bounds on the cardinalities of relations) as well as {\em arbitrary CQs} (not just Boolean or full CQs). See Section~\ref{sec:notation}.
\begin{thm}
    Let $\epsilon > 0$ be an arbitrary constant.
    Given a CQ $Q$, a set of statistics $(\Delta, \bm n)$, and a database instance $\calD_0$ that matches the schema of $Q$ and satisfies the statistics $(\Delta, \bm n)$,
    $\primalalgo$ (Algorithm~\ref{algo:ppanda}) correctly computes $Q(\calD_0)$ in time
    \begin{align}
        O\left(N^{\subw(Q, \Delta, \bm n)+\epsilon}+ \out\right), 
        \label{eq:runtime}
    \end{align}
    in data complexity, where $N\defeq|\calD_0|$, and $\out$ is the output size.
    In particular, after a preprocessing time of $O(N^{\subw(Q, \Delta, \bm n)+\epsilon})$,
    $\primalalgo$ supports constant-delay enumeration of the output tuples.
    \label{thm:runtime}
\end{thm}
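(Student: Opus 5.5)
The plan has three parts: correctness (Eq.~\eqref{eq:correctness-goal}), a bound on the shape of the recursion tree, and a bound on the work done per call. The target is that every call costs $O(N^{\subw(Q,\Delta,\bm n)+\epsilon'})$ for a small $\epsilon'$, and that the recursion tree has $N^{O(\epsilon)}$ nodes. Rescaling $\epsilon$ by a constant depending only on $Q$ then gives Eq.~\eqref{eq:runtime}.

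\textbf{Correctness.} I will prove Eq.~\eqref{eq:correctness-goal} by induction on the recursion.
\begin{itemize}
\item Each augmentation $\calD\uplus\{R'\}$ made by \calibrate or Algorithm~\ref{algo:ppanda} adds a relation $R'$ that is a projection or a join of relations already in $\calD$. Such an $R'$ contains the projection of every tuple of $Q_\calD(\calD)$, so $Q_{\calD'}(\calD')=Q_\calD(\calD)$ for the augmented instance $\calD'$.
\item In the branching step, $R(\bm Y)=R^h(\bm Y)\cup\bigcup_i R^{\ell,i}(\bm Y)$. Hence every answer of $Q_\calD(\calD)$ survives in either $\calD^h$ or some $\calD^{\ell,i}$.
\item Every relation in any $\calD$ is a superset of the matching projection of $Q_{\calD_0}(\calD_0)$ only in the sense of a relaxation; more precisely, every relation is a subset of the corresponding projection of joins of original relations. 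Thus each answer has a tuple in every original atom, which gives $\calA\subseteq Q(\calD_0)$.
\item At a leaf, the bags of a covered free-connex $(T,\chi)$ are semijoin-reduced with all relations of $\calD$, and Yannakakis is run. This returns a set sandwiched between $Q_\calD(\calD)$ and $Q(\calD_0)$.
\end{itemize}
Since the branches are disjoint on the partitioned relation, the union $\calA$ can be produced without duplicates, which gives constant-delay enumeration after preprocessing.

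\textbf{Recursion tree, heavy branches.} In a heavy branch, the number of distinct $\bm W$-values in $R^h(\bm Y)$ is at most $|R(\bm Y)|/(\theta N^\epsilon)\le N^{g(\bm W)-\epsilon}$. So the re-initialized $\bm g$ satisfies $g(\bm W)\le g_{\text{old}}(\bm W)-\epsilon$. It also satisfies $\bm g\le\bm g_{\text{old}}$ pointwise, because the cardinality invariant holds and \calibrate only lowers values. Using the potential $\sum_{\bm Z:\,g(\bm Z)<\infty} g(\bm Z)$ together with the number of infinite entries, each heavy step drops the potential by $\epsilon$. All finite values lie in $[0,O(1)]$, so there are $O(2^{|\bm V|}/\epsilon)$ heavy steps on any root-to-leaf path.

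\textbf{Recursion tree, light branches (main obstacle).} I must bound the number of consecutive light steps by a constant depending only on $Q$ and $\Delta$. My first attempt is to show that within a run of light steps the multiset of $g$-values lives in a finite set, namely $\mathbb{Z}$-combinations of bounded length of the values present at the last re-initialization and of the $n_{\bm Y|\bm X}$. I would then argue that each light step strictly decreases $\bm g$ in a well-founded order; for instance, it turns an infinite entry finite or lowers a finite entry, and no entry is ever raised. If that fails, I would show that $c$ is nondecreasing along light steps, using minimality of the violation and the fact that \calibrate only propagates bounds via monotonicity or statistics. Then each light step permanently places $\bm X\cup\bm Y$ in $\calI$, so there are at most $2^{|\bm V|}$ such steps before termination. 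Given this, every path has $O(1)$ branching nodes, each with fan-out $\ceil{N^\epsilon}+1$, so the tree has $N^{O(\epsilon)}$ nodes.

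\textbf{Per-call cost.} In a non-leaf call, Lemma~\ref{lmm:primal-feasible}(\ref{lmm:primal-feasible:2}) applies to the current $\bm g$. The statistics invariant holds, and $\calI$ does not cover any tree decomposition, since $\calI\subseteq\{\bm X: R(\bm X)\in\calD\}$ by the cardinality invariant and the termination test failed. Hence $c\le\subw(Q,\Delta,\bm n)$, so the join on line~\ref{alg:ppanda:join} costs $\tilde O(N^{\subw})$. Each relation that \calibrate adds has size at most $N^{g}$ for its new $g$-value, which is at most an earlier finite value. I still need to check that all materialized relations stay within $N^{\subw+O(\epsilon)}$; I would try to prove the invariant that every finite $g$-value created in a light branch is at most that branch's $c$. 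Finally, at a leaf the covered bags have size at most $N^{\max g}$, so Yannakakis runs in $O(N^{\subw+O(\epsilon)}+\out)$.
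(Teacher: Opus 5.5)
\textbf{Duplicates and the $+\out$ term.} You claim that, because the branches partition $R(\bm Y)$, the union $\calA$ comes out without duplicates. That does not follow, and it is false in general. A leaf evaluates only $Q'_\calD$, the join of the bags of the covered $(T,\chi)$. The relations that encode a branch ($R^{\ell,i}(\bm X\cup\bm Y)$, $\pi_{\bm W}(R^h(\bm Y))$, and relations derived from them) need not lie inside any bag, and the semijoins against them constrain each bag only separately. This is why \eqref{eq:correctness-goal} is only a sandwich. A leaf may return tuples that belong to sibling branches, so the same output tuple can be reported by many of the $\ceil{N^\epsilon}$ light siblings and their descendants. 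This is exactly the obstruction to counting that the paper points out. Summing per-leaf Yannakakis costs can then inflate the $\out$ term by a polynomial factor, and repetition-free constant-delay enumeration fails. The paper fixes this as follows. It groups the leaves by the tree decomposition used at line~\ref{alg:ppanda:terminate}, giving only $|\calT(Q)|=O(1)$ groups, and takes the union of each group's bag relations. This union stays inside $Q(\calD_0)$: for a fixed decomposition, the bag covering each original atom is the same at every leaf and was semijoin-reduced with a subset of that atom. It then evaluates each group once and removes duplicates across the $O(1)$ groups, using the Cheater's Lemma for constant delay.

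\textbf{The recursion tree.} Your first attempt at bounding light chains is circular: the set of attainable $\bm g$-values is finite only once the number of steps is already bounded. Your fallback is the paper's route (Claims~\ref{clm:c-ell} and~\ref{clm:light-path}). Its load-bearing fact is Lemma~\ref{lmm:calibrate}(\ref{lmm:calibrate:3}): after setting $g(\bm X\cup\bm Y)\gets c$ on an already calibrated $\bm g$, re-calibration cannot lower any entry below $c$, because \calibrate is a shortest-path computation with nonnegative weights $0$ and $n_{\bm Y|\bm X}$. Together with minimality of $c$ this yields $c^\ell\ge c$, and you need to prove it rather than gesture at it. More seriously, the step ``$O(1)$ branching nodes per path, hence $N^{O(\epsilon)}$ nodes'' does not follow from your own bounds. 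You get $O(1/\epsilon)$ heavy edges per path and up to $2^{|\bm V|}$ light edges between consecutive ones. So nothing prevents $\Theta(1/\epsilon)$ light edges on one path, each with fan-out $\ceil{N^\epsilon}$, and that gives only $N^{\epsilon\cdot O(1/\epsilon)}=N^{O(1)}$ nodes. Rescaling $\epsilon$ does not shrink that exponent. The paper's Claim~\ref{clm:depth} carries the same hidden $1/\epsilon$, since Claim~\ref{clm:heavy-edges} counts potential drops of size $\epsilon$, so its ``$O(N^{\epsilon'})$'' does not rescue you. You need either an $\epsilon$-independent bound on the number of light edges per path or a different partitioning scheme. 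Finally, your per-call argument must first exclude $c=\infty$. The paper does this by noting that $\bm g$ would then be a finite polymatroid, so every decomposition would already be covered.
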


In order to prove Theorem~\ref{thm:runtime},
we first prove some basic properties of the $\calibrate$ subroutine in Section~\ref{subsec:calibrate-analysis}.
Then, we prove the correctness claim of the theorem in Section~\ref{subsec:correctness},
and finally, we prove the runtime claim in Section~\ref{subsec:runtime}.

\subsection{Analysis of the \calibrate subroutine}
\label{subsec:calibrate-analysis}
The following lemma says that, on any input database instance $\ov\calD$ and set function $\ov{\bm g}$ that satisfy the cardinality invariant, \calibrate always
terminates and outputs a database instance~$\calD$ and set function $\bm g\leq \ov{\bm g}$
that satisfy all three invariants.
In addition, suppose that we take the output $\bm g$, reduce some entry $g(\bm W)$ down to $c$,
for some value $c$, and then run \calibrate again on this modified output.
Then, the lemma says that running $\calibrate$ again cannot reduce any other entry of $\bm g$ below $c$.
All properties of \calibrate become obvious once we realize that is a {\em shortest-path algorithm} in disguise, as we explain in the proof.
In particular, the last claim in the lemma below is analogous to saying: if we compute shortest paths in a weighted
graph (with non-negative weights), and then add a new edge with weight $c$, then this newly added
edge cannot create any {\em new} shortest paths with weight less than $c$.

\begin{restatable}[Properties of \calibrate]{lemma}{CalibrateLemma}
    Let $(\Delta, \bm n)$ be a set of statistics,
    $\ov\calD$ be database instance that satisfies $(\Delta, \bm n)$,
    and $\ov{\bm g}$ be a set function  that satisfies the cardinality invariant w.r.t.~$\ov\calD$.
    Then, the \calibrate subroutine (Algorithm~\ref{algo:calibrate}) terminates and returns a database instance $\calD$ and a set function $\bm g$ that satisfy the following properties:
    \begin{enumerate}
        \item $\bm g$ satisfies the cardinality, monotonicity and statistics invariants w.r.t.~$\calD$. \label{lmm:calibrate:1}
        \item $\bm g \leq \ov{\bm g}$. \label{lmm:calibrate:2}
        \item Suppose that there exists a set function $\tl{\bm g}$
        that satisfies all three invariants w.r.t.~$\ov\calD$ and
        the input~$\ov{\bm g}$ satisfies:
        $\ov{\bm g} = \tl{\bm g}[\bm W \to c]$ for some $\bm W \subseteq \bm V$, $c \in \R_+$
        where $c < \tl{g}(\bm W)$.
        Then, the output~$\bm g$ satisfies: $g(\bm Z) \geq c$ for every $\bm Z\subseteq \bm V$ where $g(\bm Z) < \ov g(\bm Z)$. \label{lmm:calibrate:3}
    \end{enumerate}
    \label{lmm:calibrate}
\end{restatable}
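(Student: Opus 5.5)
The plan is to read \calibrate as a Bellman--Ford style shortest-path computation on the finite vertex set $2^{\bm V}$, with values in $\R_+\cup\{\infty\}$. There are two kinds of edges. A \emph{monotonicity edge} goes from $\bm Y$ to $\bm X$ whenever $\bm X\subset\bm Y$, with weight $0$. A \emph{statistics edge} goes from $\bm X$ to $\bm X\cup\bm Y$ for each $(\bm Y|\bm X)\in\Delta$, with weight $n_{\bm Y|\bm X}\ge 0$. Each vertex $\bm Z$ has a source value $\ov g(\bm Z)$, and $\emptyset$ has source value $0$. Every update in Algorithm~\ref{algo:calibrate} is then a relaxation step: it sets $g(\text{head})\gets g(\text{tail})+\text{weight}$ whenever that strictly decreases the head. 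So by induction on the iterations, each current value $g(\bm Z)$ equals the cost of some walk, namely a source value plus a sum of edge weights along a path ending at $\bm Z$.

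\textbf{Termination.} Between two updates of the same vertex, the value strictly decreases. Each value is realized by a walk, and any walk containing a cycle costs at least as much as the walk with that cycle removed, since weights are nonnegative. A strict improvement therefore always comes from a simple path. The set of source-plus-simple-path costs is finite, so only finitely many updates are possible. This also gives Property~(\ref{lmm:calibrate:2}): values only ever decrease from $\ov{\bm g}$, except $g(\emptyset)$, which is set to $0\le\ov g(\emptyset)$.

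\textbf{Property~(\ref{lmm:calibrate:1}).} At termination neither if-condition holds, so monotonicity, $g(\emptyset)=0$, and the statistics invariant hold directly. For the cardinality invariant, I would prove by induction that it is preserved by every step, using the size bounds already noted in the algorithm. Two facts carry the induction. First, $|\pi_{\bm X}R(\bm Y)|\le|R(\bm Y)|\le N^{g(\bm Y)}$. Second, the join with the guard has size at most $N^{g(\bm X)+n}$. The $\uplus$ operation only intersects, so sizes never grow. Also, whenever $g(\bm Z)<\infty$ a relation $R(\bm Z)$ exists: every finite value was produced either by an existing relation or by an update that adds one. One small point needs care: the tail relation $R(\bm X)$ or $R(\bm Y)$ must exist whenever $g$ of it is finite, and this is exactly the inductive claim. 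The input $\ov\calD$ satisfies the statistics, and its guard relations are only ever intersected, which preserves degree bounds. I would mention this, though the guard is used from the current $\calD$.

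\textbf{Property~(\ref{lmm:calibrate:3}).} This is the main step. Take $\bm Z$ with $g(\bm Z)<\ov g(\bm Z)$. Its final value is the cost of a walk $P$ from some source $\bm S$ to $\bm Z$ with at least one edge, and we may take $P$ to be a simple path. I split into two cases according to whether the source $\bm S$ equals $\bm W$ with value $c$.

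\emph{Case 1: the path starts at $\bm W$ with value $c$.} Then the cost is $c$ plus nonnegative weights, so it is at least $c$.

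\emph{Case 2: every source value on the path is $\tl g$.} Here $\bm S\ne\bm W$, and $\ov g(\bm S)=\tl g(\bm S)$, or the source is $\emptyset$ with value $0=\tl g(\emptyset)$. Since $\tl{\bm g}$ satisfies the monotonicity and statistics invariants, no edge can be relaxed below $\tl{\bm g}$: we have $\tl g(\text{head})\le\tl g(\text{tail})+w$. Hence by induction along the path, $g(\bm Z)\ge\tl g(\bm Z)$. Now compare with $\ov g(\bm Z)$. If $\bm Z\ne\bm W$, then $\ov g(\bm Z)=\tl g(\bm Z)\le g(\bm Z)$, contradicting $g(\bm Z)<\ov g(\bm Z)$. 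If $\bm Z=\bm W$, then $g(\bm W)<\ov g(\bm W)=c$ together with $g(\bm W)\ge\tl g(\bm W)>c$ is also a contradiction.

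So only Case 1 can occur, and $g(\bm Z)\ge c$. The obstacle is making the walk representation rigorous, i.e.\ tracking a predecessor witness for each value. I would handle this by carrying along, as an inductive invariant, that $g(\bm Z)\ge\min\bigl(\tl g(\bm Z),\,c\bigr)$ for every $\bm Z$ whose value has changed. Each relaxation preserves this invariant by the edge inequality for $\tl{\bm g}$ and the nonnegativity of weights, which avoids explicit paths altogether.
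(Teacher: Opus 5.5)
Your proposal is correct and follows essentially the same route as the paper. The paper likewise reads \calibrate as a shortest-path computation on $2^{\bm V}$, encoding your source values as edges $\emptyset\to\bm X$ of weight $\overline{g}(\bm X)$, and bounds the number of updates by the number of simple paths. It proves Property~(3) by noting that $\tilde{\bm g}$ is already closed under shortest paths, so only paths through the new weight-$c$ edge $\emptyset\to\bm W$ can lower a value, which is exactly your Case~1/Case~2 split. Your invariant $g(\bm Z)\ge\min(\tilde g(\bm Z),c)$ holds for every $\bm Z$ at all times, since $\overline{\bm g}$ satisfies it initially, and it is a tidy way to make that step rigorous without tracking predecessors. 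Your induction for the cardinality invariant fills in a detail the paper leaves to the algorithm description.
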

\begin{proof}
    First, we prove that \calibrate always terminates.
    To that end, we show that \calibrate is a shortest-path algorithm in disguise on
    a certain weighted directed graph.
    In particular, consider the following weighted directed graph $\ov G$
    whose vertices are all subsets of $\bm V$ and has four kinds of weighted edges:
    \begin{itemize}
        \item For every $\bm X \subseteq \bm V$, we have an edge $\emptyset \to \bm X$ with weight $\ov g(\bm X)$.
        \item We have an edge $\emptyset \to \emptyset$ with weight 0.
        \item For every $\bm X \subset \bm Y \subseteq \bm V$,
        we have an edge $\bm Y \to \bm X$ with weight 0.
        \item For every statistics $(\bm Y|\bm X)$ in $\Delta$,
        we have an edge $\bm X \to (\bm X \cup \bm Y)$ with weight $n_{\bm Y|\bm X}$.
    \end{itemize}
    It is straightforward to verify that \calibrate
    is computing shortest paths from $\emptyset$ to every other vertex in $\ov G$.
    In particular, each step of \calibrate is considering some edge $\bm X \to \bm Y$
    and trying to use the current shortest path to $\bm X$ to update the shortest path to $\bm Y$.
    Because all weights in the graph are non-negative, for a fixed $\bm Y \subseteq \bm V$,
    the total number of times the algorithm can lower the value of $g(\bm Y)$
    is upper bounded by the number of simple paths from $\emptyset$ to $\bm Y$ in $\ov G$.
    Hence, the algorithm terminates.

    As long as the algorithm terminates, the output function $\bm g$ must satisfy
    the all three invariants, since the algorithm keeps going as long as
    any invariant is violated.
    Hence, Claim~(\ref{lmm:calibrate:1}) holds.
    Claim~(\ref{lmm:calibrate:2}) is obvious.

    Finally, we prove Claim~(\ref{lmm:calibrate:3}).
    Let $\tl{G}$ be the weighted directed graph defined similarly to $\ov G$ but
    using $\tl{\bm g}$ instead of $\ov{\bm g}$.
    In the language of shortest paths, Claim~(\ref{lmm:calibrate:3}) says that
    $\tl G$ is {\em closed} under shortest paths, meaning that for every vertex $\bm Y$ in $\tl G$,
    there is a shortest path from $\emptyset$ to $\bm Y$ in $\tl G$ consisting of a {\em single edge}.
    The claim also says that $\ov G$ is obtained from $\tl G$ by introducing a new edge
    $\emptyset \to \bm W$ with weight~$c$.
    Because all weights are non-negative, all {\em new} shortest paths in $\ov G$ that were not present in $\tl G$ (i.e., the shortest paths that use the newly added edge) must
    have weight at least $c$.
\end{proof}

\begin{proposition}[Runtime of \calibrate]
    \label{prop:calibrate-runtime}
    The $\calibrate$ algorithm (Algorithm~\ref{algo:calibrate}) runs in linear time in the size of the input database instance $\ov \calD$, in data complexity.
\end{proposition}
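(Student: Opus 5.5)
The plan is to show two things: \calibrate performs only a constant number of loop iterations (in data complexity), and each iteration costs $O(|\ov\calD|)$ time because no relation ever grows beyond the size of the input. Throughout, I assume the standard RAM model with hashing, so that projections, intersections and semijoins run in time linear in the sizes of the relations involved.

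\emph{Step 1: the number of iterations is $O(1)$.} In the proof of Lemma~\ref{lmm:calibrate}, each loop iteration of \calibrate strictly lowers some entry $g(\bm Y)$. The value $g(\bm Y)$ is always the weight of some path from $\emptyset$ to $\bm Y$ in the graph $\ov G$. Because all weights are nonnegative, any path that improves a value can be taken to be simple. Hence each entry is lowered at most as many times as there are simple paths in $\ov G$. The graph $\ov G$ has $2^{|\bm V|}$ vertices and depends only on $Q$ and $\Delta$, so this count is a constant in data complexity. The same holds for the number of relations that can ever be present in $\calD$: by signature-uniqueness there are at most $2^{|\bm V|}$ of them.

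\emph{Step 2: every relation stays small.} I will prove by induction over the iterations that every relation $R(\bm X)$ in the current $\calD$ satisfies $|R(\bm X)|\le M \defeq \max(1,\max_{R'\in\ov\calD}|R'|)\le \max(1,|\ov\calD|)$. The base case holds trivially, and the nullary relation $R()$ has size $1$. Augmentation either adds a new relation or replaces an existing one by an intersection, which cannot increase its size. It therefore suffices to bound each newly created relation.
\begin{itemize}
\item A monotonicity fix adds $\pi_{\bm X}(R(\bm Y))$, which has size at most $|R(\bm Y)|\le M$.
\item A statistics fix adds $R(\bm X)\Join\pi_{\bm X\cup\bm Y}(R(\bm Z))$. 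Since $\bm X\subseteq\bm X\cup\bm Y$, this join equals the semijoin $\pi_{\bm X\cup\bm Y}(R(\bm Z))\ltimes R(\bm X)$. It is therefore a subset of $\pi_{\bm X\cup\bm Y}(R(\bm Z))$ and has size at most $|R(\bm Z)|\le M$.
\end{itemize}
Two points need checking in the second case. First, the guard $R(\bm Z)$ exists because $\ov\calD\models(\Delta,\bm n)$. Second, the guard remains a valid guard after augmentations: any replacement of $R(\bm Z)$ is an intersection with a subset of itself, which only lowers degrees. The relation $R(\bm X)$ used in the join exists because $g(\bm X)<\infty$ must hold for a statistics violation to be detected, and the cardinality invariant then guarantees $R(\bm X)\in\calD$.

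\emph{Step 3: cost per iteration.} Each iteration first checks the $O(1)$ monotonicity and statistics conditions on $\bm g$, which is constant work. It then performs one projection, or one projection followed by a hash semijoin, and finally one intersection for the augmentation. All of these are linear in the sizes of the at most three relations involved. By Step 2 each of these sizes is at most $M$, so an iteration costs $O(|\ov\calD|)$. Multiplying by the constant number of iterations from Step 1 gives total time $O(|\ov\calD|)$.

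The main obstacle is Step 2. A naive accounting charges the join output as $N^{g(\bm X)+n_{\bm Y|\bm X}}$, which can be superlinear. The fix is to recognize that, since $\bm X\subseteq \bm X\cup\bm Y$, the join is a semijoin filter on a projection of the guard, so it is bounded by the guard's size. The guard itself lies in $\ov\calD$, or is a subset of such a relation obtained by intersection.
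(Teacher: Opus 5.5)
Your proof is correct. The paper states Proposition~\ref{prop:calibrate-runtime} without proof, so the comparison is with the ingredients it does supply.

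Your Step~1 is exactly the simple-path counting from the proof of Lemma~\ref{lmm:calibrate}. For relation sizes, the paper only records in Algorithm~\ref{algo:calibrate} that the new relation has at most $N^{g(\bm X)}\cdot N^{n_{\bm Y|\bm X}}$ tuples. That bound is what keeps the cardinality invariant intact. However, it is phrased in terms of $\bm g$, and the hypotheses only require $\bm g$ to upper bound log-sizes, so it says nothing linear in $|\ov\calD|$.

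Your observation fills this gap: since $\bm X\subseteq\bm X\cup\bm Y$, the join merely filters $\pi_{\bm X\cup\bm Y}(R(\bm Z))$, so it is no larger than the guard. That is the piece that makes the linear bound go through. The two bounds hold simultaneously for the same relation and serve different purposes: the paper's maintains the invariant, yours controls running time. As a bonus, your Step~2 shows \calibrate never creates a relation larger than the largest one it receives. That is also what backs the paper's remark in Section~\ref{subsec:runtime} that the current database stays controlled across the repeated calibrations along a branch, since Eq.~\eqref{eq:join-size} alone only covers the light joins.

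One small tightening in Step~1: saying an improving path ``can be taken to be simple'' is loose. Instead, argue that the history walk of a strictly improving update to $g(\bm Y)$ cannot already pass through $\bm Y$. Labels only decrease and weights are nonnegative, so such an update would not be strict. By induction, every value ever assigned is then the weight of a simple path, and distinct values correspond to distinct paths.
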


\subsection{Correctness Proof of \primalalgo}
\label{subsec:correctness}
As mentioned before, $\primalalgo$ is recursive where the top-level call is made with a database instance~$\calD_0$ (of size $N$), and each recursive call is made with an {\em augmented} database instance $\calD$ (of size potentially larger than $N$).
It is not immediately clear that the algorithm terminates, i.e., that the recursion depth is finite.
However, assuming that it does, we will prove in this section that it is correct, i.e., the top-level call to \primalalgo returns $Q(\calD_0)$.
In the next section, we will prove a finite bound on the recursion depth, hence proving termination.

In order to prove that \primalalgo is correct, we will inductively prove the following claim:
\begin{claim}
    \label{clm:correctness}
    Every recursive call to \primalalgo on a database instance $\calD$
    returns a set of tuples $\calA \subseteq \dom^{\bm F}$ that satisfies Eq.~\eqref{eq:correctness-goal}.
\end{claim}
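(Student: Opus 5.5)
We prove Claim~\ref{clm:correctness} by induction on the recursion tree of \primalalgo, assuming termination as stated above. Two facts carry the whole argument, and we record them first.

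\textbf{Fact (a): every relation in any instance $\calD$ reached by the algorithm is sound w.r.t.\ $\calD_0$.} That is, for every $R(\bm Z)\in\calD$ and every tuple $\bm v\in\dom^{\bm V}$ satisfying all atoms of $\calD_0$ (a full-join tuple), we have $\pi_{\bm Z}(\bm v)\in R(\bm Z)$ whenever $\bm v$ also satisfies the original relations of $\calD$. More usefully, $Q_\calD(\calD)\subseteq Q(\calD_0)$. We prove this by induction on how $\calD$ was built. Each augmentation adds a relation computed from relations already in $\calD$: projections, joins, selections, or the nullary $R()$. Every such relation contains the projection of every tuple of $\Join_{R\in\calD}R$ onto its schema. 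Consequently the full join over $\bm V$ is unchanged or shrinks, and the intersection performed by $\uplus$ never removes a joining tuple. Since $\calD\supseteq$ (possibly shrunk versions of) the schema relations of $Q$, any answer of $Q_\calD$ extends to a full assignment satisfying every original atom, which gives $Q_\calD(\calD)\subseteq Q(\calD_0)$. We also record that \calibrate preserves $Q_\calD(\calD)$ exactly, because each relation it adds contains the projection of the full join of the current instance.

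\textbf{Fact (b): the branching step is complete.} Let $\bm v$ be any tuple in the full join of the relations of $\calD$. Then $\pi_{\bm Y}(\bm v)\in R(\bm Y)$ lies in exactly one of $R^h(\bm Y)$ or some $R^{\ell,i}(\bm Y)$. In the light case, $\pi_{\bm X\cup\bm Y}(\bm v)\in R(\bm X)\Join R^{\ell,i}(\bm Y)$, so $\bm v$ survives in the full join of $\calD^{\ell,i}$, and also after \calibrate by the remark in Fact (a). In the heavy case, $\pi_{\bm W}(\bm v)\in\pi_{\bm W}(R^h(\bm Y))$, so $\bm v$ survives in $\calD^h$. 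Projecting onto $\bm F$ yields
\[
Q_\calD(\calD)\subseteq \bigcup_i Q_{\calD^{\ell,i}}(\calD^{\ell,i})\cup Q_{\calD^h}(\calD^h).
\]

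\textbf{Induction.} Consider first a leaf call, which takes the Yannakakis branch. After the \calibrate at the start of the call (which does not change $Q_\calD(\calD)$), a tree decomposition $(T,\chi)$ is covered. Semijoin-reduce each bag relation with all relations of $\calD$, then evaluate $Q'_\calD$. Since $\calD$ contains all bag relations, $Q_\calD(\calD)\subseteq Q'_\calD$ on the reduced relations. Conversely, every output of Yannakakis is the projection of a join of the bag relations, which by Fact (a) are subsets of projections of $\calD_0$-consistent data. Here we must argue the answer lies in $Q(\calD_0)$. Every atom $R(\bm X)$ of $Q$ is contained in some bag $\chi(t)$, and the semijoin reduction ensures each bag tuple projects into $R^{\calD}(\bm X)\subseteq R^{\calD_0}(\bm X)$. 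A consistent join over $T$ then satisfies every original atom, so its $\bm F$-projection is in $Q(\calD_0)$. This establishes $Q_\calD(\calD)\subseteq\calA\subseteq Q(\calD_0)$. For an internal call, the inductive hypothesis on the children together with Fact (b) gives the lower inclusion. The upper inclusion is immediate, since each child's answers lie in $Q(\calD_0)$.

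\textbf{Main obstacle.} The step needing most care is the leaf case. Yannakakis runs only on the bag relations, not on all of $\calD$, so $\calA$ may strictly exceed $Q_\calD(\calD)$; this is exactly the point of the paper's earlier footnote. We handle it by showing that the bag join, after semijoin reduction, still satisfies every atom of the \emph{original} $Q$. That in turn relies on the covering property of tree decompositions of $Q$ and on the fact that augmented relations only shrink the original ones.
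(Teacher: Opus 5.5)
Your proposal is correct and follows the paper's proof: induction over the recursion tree. In the leaf case, each atom of $Q$ lies in some bag, and the semijoin reduction forces bag tuples into the atom relations, which only ever shrink. In the internal case, the heavy/light and equal-degree partition of $R(\bm Y)$ is complete, and you also spell out, correctly, that \calibrate preserves the full join, which the paper leaves implicit.

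One caution: your claim in Fact (a) that every added relation contains the projection of every tuple of $\Join_{R\in\calD}R$, and hence that $\uplus$ never removes a joining tuple, holds only for \calibrate's augmentations. It fails for the branch-specific relations $R(\bm X)\Join R^{\ell,i}(\bm Y)$ and $\pi_{\bm W}(R^h(\bm Y))$. Nothing rests on it, though: $Q_\calD(\calD)\subseteq Q(\calD_0)$ follows from shrinking alone, and your leaf-case upper inclusion is carried entirely by the semijoin argument.
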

\begin{proof}[Proof of Claim~\ref{clm:correctness}]
    As a base case, consider a recursive call that terminates in line~\ref{alg:ppanda:yannakakis}
    of Algorithm~\ref{algo:ppanda}.
    The current database instance $\calD$ is always an augmented version of the original input  instance $\calD_0$.
    In addition, for every relation $R(\bm X)$ in $\calD_0$, there is a bag $R(\chi(t))$
    for some $t \in \nodes(T)$ where $\bm X \subseteq \chi(t)$ and $R(\chi(t))$ was semijoin-reduced
    with $R(\bm X)$ in line~\ref{alg:ppanda:semijoin}.
    This proves that $\calA \defeq Q'_{\calD}(\calD) \subseteq Q(\calD_0)$.
    Moreover, $Q_\calD(\calD) \subseteq Q'_{\calD}(\calD)$ by definition.\footnote{Note that in this step, we can only compute $Q'_{\calD}(\calD)$ but {\em not} necessarily $Q_\calD(\calD)$ because there is no guarantee that for every relation $R(\bm X)$ in the current database instance $\calD$, there will be a bag $R(\chi(t))$ for some $t \in \nodes(T)$ where $\bm X \subseteq \chi(t)$.
    This is because $(T,\chi)$ is a tree decomposition of the {\em original query} $Q$ but not necessarily of $Q_\calD$.
    This seems to be a fundamental limitation of similar algorithms, where they cannot be extended to counting (i.e., solving \scq) in submodular width time $O(N^{\subw(Q)})$.
    The same limitation applies to the original algorithm proposed by Daniel Marx~\cite{DBLP:journals/jacm/Marx13} as well as all subsequent algorithms~\cite{DBLP:conf/pods/Khamis0S17,theoretics:13722,berkholz_et_al:LIPIcs.MFCS.2019.58}.}

    For the inductive step, note that we are partitioning the relation $R(\bm Y)$
    into (disjoint) parts in both lines~\ref{alg:ppanda:heavylight} and~\ref{alg:ppanda:equaldegree}.
    Afterwards, for each part, we augment the database instance $\calD$ with that part and recursively call \primalalgo.
    By induction, it is straightforward to see that $\calA \subseteq Q(\calD_0)$
    and at the same time $\calA \supseteq Q_\calD(\calD)$.
\end{proof}

\subsection{Runtime Analysis of \primalalgo}
\label{subsec:runtime}
In this section, we prove that $\primalalgo$ runs in time
$O(N^{\subw(Q, \Delta, \bm n)+\epsilon}+$ $\out)$ in data complexity, and also supports constant-delay enumeration after a pre-processing time of $O(N^{\subw(Q, \Delta, \bm n)+\epsilon})$.
Along with the previous section, this will complete the proof of Theorem~\ref{thm:runtime}.

    The execution of $\primalalgo$ can be represented as a recursion tree, where each node is a call to $\primalalgo$.
    Each internal node has $\ceil{N^\epsilon} + 1$ children, namely $\ceil{N^\epsilon}$ children
    corresponding to light parts in line~\ref{alg:ppanda:light} and one child
    corresponding to the heavy part in line~\ref{alg:ppanda:heavy}.
    An edge from a parent to a child is called {\em light} if it corresponds to a recursive call on a light partition in line~\ref{alg:ppanda:light},
    and {\em heavy} otherwise.
    A path from an ancestor node to a descendant node is called a {\em light path} if it consists of only light edges. Our first target is to prove the following claim:

    \begin{claim}
        \label{clm:light-path}
        A light path can have length at most $2^{|\bm V|}=O(1)$.
    \end{claim}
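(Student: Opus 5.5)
The plan is to attach a potential to each node of the recursion tree and show it strictly decreases along every light edge. For a node with set function $\bm g$ (after \calibrate), define the finite family $\calF(\bm g)\defeq\setof{\bm Z\subseteq\bm V}{g(\bm Z)<\infty}$. Along a light edge the child's function is $\bm g^\ell$, obtained from $\bm g[(\bm X\cup\bm Y)\to c]$ by \calibrate. The potential will be the number of sets $\bm Z$ with $g(\bm Z)\le c_{\text{cur}}$, where $c_{\text{cur}}$ is the value of $c$ from Eq.~\eqref{eq:c} at that node. Equivalently, I will track the set $\calI$ of Lemma~\ref{lmm:primal-feasible}. I will show that this set strictly grows along a light edge, so it grows by at least one element each time. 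Since it is a subset of $2^{\bm V}$, a light path then has at most $2^{|\bm V|}$ edges.

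\textbf{Step 1 (no entry drops below $c$).} At a node with function $\bm g$ satisfying all three invariants, set $\tl{\bm g}\defeq\bm g$, $\bm W\defeq\bm X\cup\bm Y$, and note $c<g(\bm X\cup\bm Y)$ by the choice in line~\ref{alg:ppanda:find-XY}. Lemma~\ref{lmm:calibrate}(\ref{lmm:calibrate:3}) applies, so every entry of $\bm g^\ell$ that was lowered is still $\ge c$. Using Lemma~\ref{lmm:calibrate}(\ref{lmm:calibrate:2}), for every $\bm Z$ either $g^\ell(\bm Z)=g(\bm Z)$, or $c\le g^\ell(\bm Z)<g(\bm Z)$, or $\bm Z=\bm X\cup\bm Y$ with $g^\ell(\bm Z)\le c$.

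\textbf{Step 2 ($c$ is nondecreasing along light edges).} Let $c'$ be the value from Eq.~\eqref{eq:c} for $\bm g^\ell$. I claim $c'\ge c$. Suppose some pair $(\bm X',\bm Y')$ had $f^\ell(\bm X',\bm Y')<c$ and were violated under $\bm g^\ell$. Then $g^\ell(\bm X'),g^\ell(\bm Y')<c$ because $\bm g^\ell$ is monotone, and $g^\ell(\bm X'\cap\bm Y')\le g^\ell(\bm X')$. By Step 1, these three entries coincide with their values under $\bm g$. Since $f^\ell<c$, the pair is violated under $\bm g$ at level below $c$ as soon as $g(\bm X'\cup\bm Y')\ge g^\ell(\bm X'\cup\bm Y')$, and this inequality always holds. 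That contradicts the minimality of $c$. So $c'\ge c$.

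\textbf{Step 3 (strict growth of $\calI$).} With $c'\ge c$, every $\bm Z$ satisfying $g(\bm Z)\le c$ also satisfies $g^\ell(\bm Z)\le g(\bm Z)\le c\le c'$. Hence $\calI(\bm g)\subseteq\calI(\bm g^\ell)$. Moreover $\bm X\cup\bm Y\in\calI(\bm g^\ell)$, since $g^\ell(\bm X\cup\bm Y)\le c\le c'$, whereas $\bm X\cup\bm Y\notin\calI(\bm g)$, since $g(\bm X\cup\bm Y)>c$. So the inclusion is strict, and $|\calI|$ increases by at least one along each light edge. Because $|\calI|\le 2^{|\bm V|}$, any light path has length at most $2^{|\bm V|}$.

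\textbf{Main obstacle.} The delicate step is Step 2. It hinges on Step 1 guaranteeing that no entry falls strictly below $c$ except $\bm X\cup\bm Y$, which sits at or below $c$. It also requires care with the definitions. Since $\calI$ uses "$\le c$", ties at exactly $c$ must be handled consistently. Infinities must also be handled: when $g(\bm X'\cap\bm Y')=\infty$, $f$ is undefined and excluded. If the monotone chain of $c$'s needs $c'\ge c$ rather than strict inequality, the argument above supplies exactly that.
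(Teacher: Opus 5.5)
Your proof is correct and follows the paper's route: Step~1 is the paper's Claim~\ref{clm:calibration}, Step~2 proves the paper's $c^\ell\ge c$ (your contrapositive form, where a violated pair with $f^\ell<c$ uses only unchanged entries below $c$ and so was already violated under $\bm g$, is a tidier version of the paper's case split), and Step~3 is exactly the paper's strict growth $\calI^\ell\supset\calI$ bounded by $2^{|\bm V|}$. One small tightening is needed: Step~1 should record $g^\ell(\bm X\cup\bm Y)=c$ rather than $\le c$. Lemma~\ref{lmm:calibrate}(\ref{lmm:calibrate:3}) gives this because $\ov g(\bm X\cup\bm Y)=c$, and Step~2 relies on it, since otherwise one of $\bm X',\bm Y',\bm X'\cap\bm Y'$ could be $\bm X\cup\bm Y$ with a value below $c$ that differs from its $\bm g$-value.
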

    In order to prove Claim~\ref{clm:light-path}, we will prove a series of claims.
    To that end, we introduce some notation.
    Let $f^\ell, c^\ell, h^\ell, \calI^\ell$ be defined as in Eq.~\eqref{eq:f(X,Y)}-\eqref{eq:I} but, instead of $\bm g$, they are with respect to $\bm g^\ell$ {\em after} calibration is complete in line~\ref{alg:ppanda:g_ell}.

    \begin{claim}
        \label{clm:calibration}
        For every $\bm Z \subseteq \bm V$, $g^\ell(\bm Z) \leq g(\bm Z)$.
        Moreover, if $g^\ell(\bm Z) < g(\bm Z)$, then $g^\ell(\bm Z)\geq c$.
    \end{claim}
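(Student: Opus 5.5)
The plan is to obtain the claim as an instance of Lemma~\ref{lmm:calibrate}, Claim~(\ref{lmm:calibrate:3}). We take $\tl{\bm g}$ to be the current $\bm g$ (the one in the current call, after any calibration). We take $\ov{\bm g}$ to be $\bm g^\ell$ as initialized in line~\ref{alg:ppanda:g_ell-initialize}, that is, $\bm g[(\bm X\cup\bm Y)\to c]$. We take $\bm W$ in the lemma to be $\bm X\cup\bm Y$. We take $\ov\calD$ to be $\calD^{\ell,i}$ before calibration.

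First we check that $\bm g$ satisfies all three invariants with respect to $\calD^{\ell,i}$. In every call, $\bm g$ satisfies the invariants with respect to $\calD$: either it was passed in after a calibration, or it was freshly initialized and then calibrated. The instance $\calD^{\ell,i}$ differs from $\calD$ only in that the relation on $\bm X\cup\bm Y$ is added or intersected. Intersection only shrinks that relation. If the relation is new, then $g(\bm X\cup\bm Y)=\infty$ before the update, so the cardinality invariant for that entry holds trivially. Monotonicity and the statistics invariant concern only $\bm g$, so they are unaffected by the change of instance.

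Next we check the hypotheses on $c$. By Lemma~\ref{lmm:primal-feasible}, $c\in\R_+$, since it is attained by some pair (the algorithm reached the else-branch, and $f(\bm X,\bm Y)$ is finite because $g(\bm X)$ and $g(\bm Y)$ are finite whenever the join is defined). By the choice of $(\bm X,\bm Y)$ in line~\ref{alg:ppanda:find-XY}, $g(\bm X\cup\bm Y)>f(\bm X,\bm Y)=c$, so $c<\tl g(\bm W)$. The cardinality invariant for $\ov{\bm g}$ with respect to $\ov\calD$ follows from the size bound $|R^{\ell,i}(\bm X\cup\bm Y)|\le N^c$ noted after line~\ref{alg:ppanda:join}, together with the fact that intersecting with an existing relation can only shrink it.

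With the hypotheses in place, Lemma~\ref{lmm:calibrate} gives two facts about the calibrated output $\bm g^\ell$. From Claim~(\ref{lmm:calibrate:2}), $\bm g^\ell\le\ov{\bm g}\le\bm g$, where the second inequality holds because $\ov{\bm g}$ differs from $\bm g$ only by lowering one entry. From Claim~(\ref{lmm:calibrate:3}), $g^\ell(\bm Z)\ge c$ whenever $g^\ell(\bm Z)<\ov g(\bm Z)$.

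It remains to handle the case $g^\ell(\bm Z)<g(\bm Z)$ but $g^\ell(\bm Z)=\ov g(\bm Z)$. Then $\ov g(\bm Z)<g(\bm Z)$, which can only happen when $\bm Z=\bm X\cup\bm Y$, and in that case $g^\ell(\bm Z)=\ov g(\bm Z)=c\ge c$. The main thing to be careful about is the invariant check in the first step: the lemma requires $\tl{\bm g}$ to satisfy the invariants with respect to the augmented instance $\ov\calD$, not the original $\calD$.
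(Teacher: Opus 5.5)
Your proposal is correct and follows the paper's route---instantiating Lemma~\ref{lmm:calibrate} with $\tl{\bm g}=\bm g$, $\ov{\bm g}=\bm g[(\bm X\cup\bm Y)\to c]$ and $\bm W=\bm X\cup\bm Y$---and it is more careful than the paper in two places: you check the invariants with respect to $\calD^{\ell,i}$ rather than $\calD$, and you handle $\bm Z=\bm X\cup\bm Y$ separately, which is the only point where the lemma's comparison with $\ov{\bm g}$ and the claim's comparison with $\bm g$ differ. The one slip is attributing $c\in\R_+$ to Lemma~\ref{lmm:primal-feasible}, which only gives $c\in\R_+\cup\{\infty\}$; finiteness actually comes from the selected pair satisfying $g(\bm X\cup\bm Y)>f(\bm X,\bm Y)$ (that such a pair exists is the covering argument at the start of the proof of Claim~\ref{clm:c-subw}), and $c\geq g(\bm X)\geq 0$ follows from monotonicity.
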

    \begin{proof}[Proof of Claim~\ref{clm:calibration}]
    Claim~\ref{clm:calibration} follows immediately from Lemma~\ref{lmm:calibrate}.
    In particular, the lemma says that $\bm g^\ell \leq \bm g$.
    Moreover, $\bm g$ satisfies all three invariants w.r.t. $\calD$
    and $g^{\ell}(\bm Z)\defeq \bm g[(\bm X \cup \bm Y) \to c]$, per line~\ref{alg:ppanda:g_ell-initialize} of Algorithm~\ref{algo:ppanda}.
    Thanks to the last claim of Lemma~\ref{lmm:calibrate}, the \calibrate call in line~\ref{alg:ppanda:g_ell-calibrate}
    cannot reduce any $g^{\ell}(\bm Z)$ value below $c$.
    \end{proof}

    We now use Claim~\ref{clm:calibration} to prove the following claim.
    \begin{claim}
        \label{clm:c-ell}
        $c^\ell \geq c$ and $\calI^\ell \supset \calI$.
    \end{claim}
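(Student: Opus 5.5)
We prove the two parts of Claim~\ref{clm:c-ell} separately, using Claim~\ref{clm:calibration} as the main tool. Two facts from Claim~\ref{clm:calibration} will be used repeatedly. First, $\bm g^\ell \leq \bm g$. Second, every entry that changed satisfies $g^\ell(\bm Z) \geq c$. Consequently, if $g^\ell(\bm Z) < c$ then $g^\ell(\bm Z) = g(\bm Z)$. In addition, line~\ref{alg:ppanda:g_ell-initialize} sets $g^\ell(\bm X\cup\bm Y)$ to $c$, and since $g(\bm X\cup\bm Y) > c$, Claim~\ref{clm:calibration} gives $g^\ell(\bm X\cup\bm Y) = c$ after calibration.

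\emph{Step 1: $c^\ell \geq c$.} Take any pair $\bm X',\bm Y'$ violating submodularity for $\bm g^\ell$, i.e.\ $g^\ell(\bm X'\cup\bm Y') > f^\ell(\bm X',\bm Y')$. We must show $f^\ell(\bm X',\bm Y') \geq c$, and we argue by contradiction, assuming $f^\ell(\bm X',\bm Y') < c$. By monotonicity of $\bm g^\ell$ (Lemma~\ref{lmm:calibrate}), $f^\ell(\bm X',\bm Y') \geq g^\ell(\bm X')$ and likewise $\geq g^\ell(\bm Y')$. Hence $g^\ell(\bm X')$ and $g^\ell(\bm Y')$ are both $< c$, and $g^\ell(\bm X'\cap\bm Y') \leq g^\ell(\bm X') < c$ as well. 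By the consequence above, these three values coincide with their $\bm g$-values, so $f^\ell(\bm X',\bm Y') = f(\bm X',\bm Y')$. Now look at the union. If $g^\ell(\bm X'\cup\bm Y') < c$, then $g(\bm X'\cup\bm Y') = g^\ell(\bm X'\cup\bm Y') > f(\bm X',\bm Y')$, so $(\bm X',\bm Y')$ is a violation of $\bm g$ with $f$-value $< c$. This contradicts the minimality of $c$ in Eq.~\eqref{eq:c}. Otherwise $g^\ell(\bm X'\cup\bm Y') \geq c$, and then $g(\bm X'\cup\bm Y') \geq g^\ell(\bm X'\cup\bm Y') \geq c > f(\bm X',\bm Y')$, which is again a violation of $\bm g$ below $c$ and the same contradiction. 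So $c^\ell \geq c$. The only delicate point here is the case split on whether the union changed, and the argument does not depend on which branch occurs.

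\emph{Step 2: $\calI^\ell \supseteq \calI$.} If $\bm Z \in \calI$, then $g^\ell(\bm Z) \leq g(\bm Z) \leq c \leq c^\ell$, so $\bm Z \in \calI^\ell$.

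\emph{Step 3: strictness.} The witness is $\bm X\cup\bm Y$. We have $g^\ell(\bm X\cup\bm Y) = c \leq c^\ell$, so $\bm X\cup\bm Y \in \calI^\ell$. On the other hand, $g(\bm X\cup\bm Y) > f(\bm X,\bm Y) = c$, so $\bm X\cup\bm Y \notin \calI$. This gives $\calI^\ell \supset \calI$.

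The main obstacle is Step 1: showing that lowering entries via calibration cannot create a new violation cheaper than $c$. The plan addresses it by the observation that all entries below $c$ are untouched.
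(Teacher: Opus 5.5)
Your proof is correct and follows essentially the same route as the paper. Both rest on the two consequences of Claim~\ref{clm:calibration} ($\bm g^\ell \leq \bm g$, and every lowered entry is at least $c$) together with monotonicity of $\bm g^\ell$, and your Steps~2 and~3 match the paper's argument. In Step~1 you argue by contradiction, forcing $\bm X'$, $\bm Y'$ and $\bm X'\cap\bm Y'$ to be unchanged, whereas the paper splits cases on whether $(\bm X',\bm Y')$ already violated submodularity in $\bm g$ and whether $g(\bm X')$ or $g(\bm Y')$ dropped. Your case split on the union is unnecessary, since $g(\bm X'\cup\bm Y') \geq g^\ell(\bm X'\cup\bm Y') > f^\ell(\bm X',\bm Y')$ holds directly.
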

    \begin{proof}[Proof of Claim~\ref{clm:c-ell}]
        First, we prove that $c^\ell \geq c$.
        To that end, we show that for every pair $\bm X', \bm Y' \subseteq \bm V$
        where $\bm g^\ell$ violates submodularity, i.e., where $g^\ell(\bm X' \cup \bm Y') > f^\ell(\bm X', \bm Y')$, we must have $f^\ell(\bm X', \bm Y') \geq c$.
        We consider two cases:
        \begin{itemize}
        \item Suppose that the pair $\bm X', \bm Y'$ above used to satisfy submodularity
        in $\bm g$, i.e., $g(\bm X' \cup \bm Y') \leq f(\bm X', \bm Y')$.
        Since the pair violates submodularity in $\bm g^\ell$,
        it must be the case that $g^\ell(\bm X') < g(\bm X')$ (or symmetrically $g^\ell(\bm Y') < g(\bm Y')$). In this case, $g^\ell(\bm X') \geq c$ by Claim~\ref{clm:calibration}, hence
        $f^\ell(\bm X', \bm Y') \defeq g^\ell(\bm X') + g^\ell(\bm Y') - g^\ell(\bm X' \cap \bm Y') \geq g^\ell(\bm X') \geq c$ where the first inequality holds by monotonicity of $\bm g^\ell$.

        \item Now suppose that the pair $\bm X', \bm Y'$ already violated submodularity in $\bm g$, i.e., $g(\bm X'\cup \bm Y') > f(\bm X', \bm Y')\geq c$ (where the last inequality follows by definition of $c$).
        We recognize the following cases:
        \begin{itemize}
            \item If $g^\ell(\bm X')<g(\bm X')$, then $g^\ell(\bm X')\geq c$ by Claim~\ref{clm:calibration} and $f^\ell(\bm X', \bm Y') \defeq g^\ell(\bm X') + g^\ell(\bm Y') - g^\ell(\bm X'\cap \bm Y') \geq g^\ell(\bm X') \geq c$ by monotonicity of $\bm g^\ell$.
            \item If $g^\ell(\bm Y')<g(\bm Y')$, then this is symmetric to the above case.
            \item If $g^\ell(\bm X')=g(\bm X')$ and $g^\ell(\bm Y') = g(\bm Y')$, then
            $f^\ell(\bm X', \bm Y') = g(\bm X') + g(\bm Y') - g^{\ell}(\bm X' \cap \bm Y') \geq
            g(\bm X') + g(\bm Y') - g(\bm X' \cap \bm Y') = f(\bm X', \bm Y') \geq c$.
        \end{itemize}
        \end{itemize}
        In all above cases, we have $f^\ell(\bm X', \bm Y') \geq c$, hence $c^\ell \geq c$.

        The fact that $\calI^\ell\supseteq \calI$ follows from $c^\ell\geq c$ and $g^\ell(\bm Z)\leq g(\bm Z)$ for all $\bm Z$. Moreover, $\calI^\ell$ contains $\bm X\cup\bm Y$ from line~\ref{alg:ppanda:find-XY},
        while $\calI$ doesn't, thus $\calI^\ell\supset \calI$.
    \end{proof}
    Now, we are ready to prove Claim~\ref{clm:light-path}.
    \begin{proof}[Proof of Claim~\ref{clm:light-path}]
   By Claim~\ref{clm:c-ell},
    the size of $\calI$ increases by at least one with every recursive call to the light branch.
    Moreover, this size cannot exceed $2^{|\bm V|}$.
    \end{proof}
%
    
    Claim~\ref{clm:light-path} proves an upper bound on the number of consecutive light edges
    on any path from the root to a leaf in the recursion tree.
    In contrast, the following claim proves an upper bound on the number of heavy edges
    on any path from the root to a leaf.
    The two claims together imply an upper bound on the depth of any leaf.
    \begin{claim}
        \label{clm:heavy-edges}
        Any path from the root to a leaf can have at most $O(1)$ heavy edges.
    \end{claim}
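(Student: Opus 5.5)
Each heavy edge augments the database with a relation on a variable set $\bm W$. We will show that such an augmentation strictly decreases an integer-valued potential that depends only on the database, not on $\bm g$. Since a heavy recursive call re-initializes $\bm g$ from scratch via Eq.~\eqref{eq:initial-g}, the natural measure is the sizes of the relations $R(\bm Z)\in\calD$ for $\bm Z\subseteq\bm V$, measured on a coarse scale.

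Fix a heavy edge out of a node with database $\calD$, set function $\bm g$, and minimizing pair $(\bm X,\bm Y)$ with $\bm W=\bm X\cap\bm Y$. Every $\bm w\in\pi_{\bm W}(R^h(\bm Y))$ has degree more than $\theta N^\epsilon=N^{g(\bm Y)-g(\bm W)+\epsilon}$ in $R(\bm Y)$. Since $|R(\bm Y)|\le N^{g(\bm Y)}$ by the cardinality invariant, we get
\[
|\pi_{\bm W}(R^h(\bm Y))|< N^{g(\bm W)-\epsilon}.
\]
After augmentation, the relation on $\bm W$ in $\calD^h$ therefore has size below $N^{g(\bm W)-\epsilon}$. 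Its log-size is thus at least $\epsilon$ below $g(\bm W)$, and hence at least $\epsilon$ below $\log_N$ of the size of every relation on $\bm W$ existing in $\calD$. Here we use the cardinality invariant together with the fact that $g(\bm W)<\infty$; this holds because $g(\bm W)\le g(\bm X)<\infty$, as $f(\bm X,\bm Y)=c$ is finite.

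We would like to define the potential as $\Phi(\calD)=\sum_{\bm Z\subseteq\bm V}\lceil \log_N |R(\bm Z)|/\epsilon\rceil$, with a suitable cap for missing $\bm Z$. The difficulty is that relations along a path only shrink or get added, and added relations may be large (up to $N^{c}$), so the sum can grow. The fix is to track the minimum log-size over the path rather than the current one. Augmentation intersects relations, so the log-size of $R(\bm Z)$ is non-increasing along any root-to-leaf path once $R(\bm Z)$ exists. Each heavy edge lowers $\log_N|R(\bm W)|$ by at least $\epsilon$ for its $\bm W$, provided $R(\bm W)$ already existed when the edge was taken.

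That proviso is the main obstacle: $R(\bm W)$ could have been created by calibration, whose value $g(\bm W)$ might be smaller than the true size. In that case we use $g(\bm W)\ge\log_N|R(\bm W)|$ directly, and the new size is still below $N^{g(\bm W)-\epsilon}\le N^{\log_N|R(\bm W)|}\cdot$ (a bound measured against $g$). To close the argument we should instead use the potential $\Psi=\sum_{\bm Z}\min(\text{current } g\text{-value},\log_N|R(\bm Z)|)$ restricted to the sizes, bounded above by a constant $C_Q$ (every relation ever created has size $N^{O(1)}$ by the cost bounds on joins, as $c\le\subw$ and calibration multiplies by at most $N^{\max n}$) and below by $0$ (or the relation is empty and the branch ends trivially).

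Each heavy edge lowers the $\bm W$-entry by at least $\epsilon$, and no entry ever increases. So there are at most $2^{|\bm V|}\cdot C_Q/\epsilon=O(1)$ heavy edges on any root-to-leaf path, since $\epsilon$ and $Q$ are constants.
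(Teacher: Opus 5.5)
Your overall template matches the paper's: a bounded potential that never increases and drops by a fixed amount per heavy edge. The decrease step, however, fails.

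From $|\pi_{\bm W}(R^h(\bm Y))|<N^{g(\bm W)-\epsilon}$ you infer that the new $\bm W$-relation is $\epsilon$ below $\log_N|R(\bm W)|$. This uses the cardinality invariant backwards. The invariant only gives $\log_N|R(\bm W)|\le g(\bm W)$, and the current $R(\bm W)$ can be far smaller than $N^{g(\bm W)}$. Calibration's monotonicity step does this routinely: it sets $g(\bm Z)\gets g(\bm Y')$ together with $R(\bm Z)\gets\pi_{\bm Z}(R(\bm Y'))$, and that projection may be tiny. Already at the root of $Q_\square$ this happens. Take all four relations of size $N/4$, with $R(X,Y)$ and $U(W,X)$ using the same $N^{0.1}$ distinct $X$-values. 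Calibration gives $g(X)=g(XY)$ while $|R(X)|\le N^{0.1}$. The pair $(\{W,X\},\{X,Y\})$ attains the minimum $c$ under some tie-breaking, every $X$-value is heavy, and the heavy edge leaves $R(X)$ unchanged.

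Your diagnosis of the obstacle is also inverted. $g(\bm W)$ is never below the log-size, and $R(\bm W)$ always exists because $g(\bm W)<\infty$. The repair does not help either. By the cardinality invariant, $\min(g(\bm Z),\log_N|R(\bm Z)|)=\log_N|R(\bm Z)|$, so $\Psi$ is just the log-size sum, and ``each heavy edge lowers the $\bm W$-entry by $\epsilon$'' is the same false claim.

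The missing idea is to measure the drop against the last moment when $\bm g$ was \emph{exact}. The paper uses $\varphi(\calD)=\sum_{R(\bm Z)\in\calD}\log_N|R(\bm Z)|$ plus $\log_N\Gamma$ for every absent $\bm Z$, with $\Gamma=N^{|\bm V|+1}$. Adding a relation therefore never raises $\varphi$, which also removes your worry about growth. The potential is compared at the inputs of consecutive heavy calls. At a heavy call, $\ov{\bm g}$ is initialized to the exact log-sizes, and $\bm g$ only decreases until the next heavy edge. Hence $\log_N|\pi_{\bm W}(R^h(\bm Y))|\le g(\bm W)-\epsilon\le\ov g(\bm W)-\epsilon$, which is $\epsilon$ below the $\bm W$-entry at the \emph{previous heavy call}. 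The drop may have occurred during intermediate light edges or calibrations rather than on the heavy edge itself, as in the example above.

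Alternatively, you can use $\bm g$ itself as the potential:
\begin{itemize}
\item Re-initialization at a heavy child can only lower $\bm g$, because the parent's $\bm g$ upper-bounded all log-sizes and only $R(\bm W)$ changes.
\item Light updates and calibration only lower $\bm g$.
\item At each heavy edge the $\bm W$-entry falls below $g(\bm W)-\epsilon$.
\end{itemize}
Since $0\le g(\bm W)\le g(\bm X)\le c\le\subw(Q,\Delta,\bm n)$ by Claim~\ref{clm:c-subw}, the potential $\sum_{\bm Z}\min(g(\bm Z),\subw(Q,\Delta,\bm n))$ is bounded and yields the $O(1)$ bound. The re-initialization you treated as an obstacle is exactly what makes this argument work.
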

    \begin{proof}[Proof of Claim~\ref{clm:heavy-edges}]
        Let $\Gamma$ be a constant that is larger than the full join of all relation instances in $\calD_0$, e.g.,
        $\Gamma\defeq N^{|\bm V|+1}$.
        Define the following potential function:
        \begin{align*}
            \varphi(\calD) &\defeq \sum_{R(\bm X)\in\calD} \log_N |R(\bm X)|
            +\sum_{\substack{\bm X\subseteq \bm V\\R(\bm X)\notin\calD}} \log_N \Gamma
        \end{align*}
        Clearly, $\varphi(\calD) = O(1)$ in data complexity.
        Consider a path from the root to a leaf.
        We argue that between every two consecutive heavy edges (with an arbitrary number of light edges in between), the potential decreases by at least $\epsilon$.
        In particular, let $\ov{\calD}^h$ and ${\calD}^{h}$ be the input database instances to two consecutive heavy calls to $\primalalgo$ in line~\ref{alg:ppanda:heavy} (with potentially many light calls in line~\ref{alg:ppanda:light} in between),
        and let $\ov{\bm g}$ and $\bm g$ be the two corresponding set functions.
        We argue that $\varphi({\calD}^{h}) \leq \varphi(\ov{\calD}^{h}) - \epsilon$.
        The first heavy call will initialize $\ov{\bm g}$ in line~\ref{alg:ppanda:initialize-g}
        to satisfy the cardinality invariant. From that point until the second heavy call, the cardinality invariant
        will always be maintained and $g(\bm X)$ can never increase for any $\bm X\subseteq \bm V$.
        At the time we reach the second heavy call in line~\ref{alg:ppanda:heavy}, we have
        \begin{align*}
            \log_N |R^{h}(\bm X\cap\bm Y)| &\leq \log_N \frac{|R(\bm Y)|}{\theta \times {N^\epsilon}}\\
            &= \log_N |R(\bm Y)| - \log_N \theta - \epsilon \\
            &\leq g(\bm Y) - \log_N \theta - \epsilon &\text{(by the cardinality invariant)}\\
            &= g(\bm X\cap\bm Y) - \epsilon &\text{(by definition of $\theta$ in line~\ref{alg:ppanda:theta})}\\
            &\leq \ov g(\bm X\cap\bm Y) - \epsilon &\text{($\bm g$ never increases between heavy calls)}\\
            &= \log_N |\ov{R}^h(\bm X\cap \bm Y)| - \epsilon &\text{(by initialization of $\ov{\bm g}$ in line~\ref{alg:ppanda:initialize-g})}
        \end{align*}
        This proves $\varphi({\calD}^{h}) \leq \varphi(\ov{\calD}^{h}) - \epsilon$.
        Moreover, since $\varphi(\calD) = O(1)$, the number of heavy edges in any path from the root to a leaf is $O(1)$.
    \end{proof}

    Claims~\ref{clm:light-path} and~\ref{clm:heavy-edges} together imply the following claim:
    \begin{claim}
        \label{clm:depth}
        The depth of the recursion tree of $\primalalgo$ is $O(1)$ in data complexity.
    \end{claim}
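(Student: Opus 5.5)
The plan is to combine Claim~\ref{clm:light-path} and Claim~\ref{clm:heavy-edges} by decomposing an arbitrary root-to-leaf path of the recursion tree along its heavy edges. Fix any root-to-leaf path $P$. Let its heavy edges, in order, be $e_1,\dots,e_k$. By Claim~\ref{clm:heavy-edges}, $k$ is bounded by a constant $K$ depending only on $Q$, $(\Delta,\bm n)$, and $\epsilon$. Concretely, $K \le \varphi(\calD_0)/\epsilon + 1 \le 2^{|\bm V|}(|\bm V|+1)/\epsilon + 1$, using that $\varphi$ is nonnegative and drops by at least $\epsilon$ between consecutive heavy calls.

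Removing the heavy edges splits $P$ into at most $k+1$ maximal subpaths consisting only of light edges. These are the segments before $e_1$, between $e_i$ and $e_{i+1}$, and after $e_k$. Each such segment is a light path in the sense defined above, so by Claim~\ref{clm:light-path} it has length at most $2^{|\bm V|}$. Summing the segment lengths and the heavy edges gives
\[
|P| \;\le\; (k+1)\,2^{|\bm V|} + k \;\le\; (K+1)\,2^{|\bm V|} + K,
\]
which is $O(1)$ in data complexity. It is independent of $N$ because $\epsilon$ and $|\bm V|$ are fixed.

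One point must be checked for Claim~\ref{clm:light-path} to apply to every segment. That claim measures progress through $\calI$ computed from the $\bm g$ passed along light edges. A segment that begins right after a heavy edge starts from a freshly initialized and calibrated $\bm g$; this is harmless, because the claim's argument only needs that $|\calI|$ strictly increases along consecutive light edges and is at most $2^{|\bm V|}$, whatever the starting $\bm g$.

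The main subtlety is termination. The statement only asks about depth, but we should make sure every path is finite so that "leaf" is meaningful. The bound above is uniform over all finite paths. Any infinite path would contain arbitrarily long finite prefixes, and the same counting applies to prefixes, so it would have to contain more than $K$ heavy edges or a light run longer than $2^{|\bm V|}$. Both are impossible, so the recursion tree has depth $O(1)$.
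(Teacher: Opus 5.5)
Your proof is correct and follows the paper's approach: the paper simply states that Claims~\ref{clm:light-path} and~\ref{clm:heavy-edges} together imply the depth bound. Splitting a root-to-leaf path into at most $K+1$ maximal light runs separated by at most $K$ heavy edges is the natural way to combine them, and your extra remarks only spell out details the paper leaves implicit: Claim~\ref{clm:light-path} applies to runs that start from a freshly initialized $\bm g$, and the counting extends to finite prefixes, which gives termination.
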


    Claim~\ref{clm:depth} along with the fact that each internal (i.e., non-leaf) node has $\ceil{N^\epsilon}+1$ children imply that the size of the recursion tree is $O(N^{\epsilon'})$ for some constant $\epsilon'>0$.
    Next, we argue that the time spent at each internal node is $O(N^{\subw(Q, \Delta, \bm n)+\epsilon})$. To that end, we need the following claim:
    \begin{claim}
        \label{clm:c-subw}
        In line~\ref{alg:ppanda:c}, $c \leq \subw(Q, \Delta, \bm n)$.
    \end{claim}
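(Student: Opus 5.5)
The plan is to apply Claim~(\ref{lmm:primal-feasible:2}) of the Truncation Lemma (Lemma~\ref{lmm:primal-feasible}) to the current set function $\bm g$ at the moment line~\ref{alg:ppanda:c} is executed. Since line~\ref{alg:ppanda:find-XY} picks $(\bm X,\bm Y)$ attaining the minimum in Eq.~\eqref{eq:c}, the value $c$ computed there is exactly the $c$ of the lemma. The lemma has three hypotheses: $\bm g$ is monotone with $g(\emptyset)=0$, $\bm g\models(\Delta,\bm n)$, and $\calI=\setof{\bm Z}{g(\bm Z)\le c}$ covers no tree decomposition in $\calT(Q)$. I will check each in turn.

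The first two hypotheses come from the invariants. Every call of \primalalgo reaches line~\ref{alg:ppanda:c} with a $\bm g$ that has just been calibrated. In the heavy branch (no $\bm g$ passed) and at the root, $\bm g$ is initialized and then calibrated. In a light branch, $\bm g^\ell$ is calibrated in line~\ref{alg:ppanda:g_ell-calibrate} before being passed. By Lemma~\ref{lmm:calibrate}, Claim~(\ref{lmm:calibrate:1}), the resulting $\bm g$ satisfies the cardinality, monotonicity and statistics invariants with respect to the current $\calD$. The monotonicity invariant gives the first hypothesis and the statistics invariant gives the second.

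The third hypothesis is the crux, and it comes from the termination test in line~\ref{alg:ppanda:terminate}. We only reach line~\ref{alg:ppanda:c} if no $(T,\chi)\in\calT(Q)$ is covered by $\calD$, i.e., for each tree decomposition some bag $\chi(t)$ has no relation $R(\chi(t))\in\calD$. By the cardinality invariant, $g(\chi(t))=\infty$ for such a bag. To conclude $\chi(t)\notin\calI$, I need $c<\infty$. This holds because the else branch executes only when the $\argmin$ in line~\ref{alg:ppanda:find-XY} is over a nonempty set, which yields a finite $f(\bm X,\bm Y)$. The delicate point is that some violating pair must have finite $f$. I would argue this directly. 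A single relation cannot cover every $\chi(t)$, and the atoms of $Q$ have finite $g$. Taking a minimal set $\bm Z$ with $g(\bm Z)=\infty$ and splitting it as $\bm X'\cup\bm Y'$ with $g(\bm X'),g(\bm Y')<\infty$ (possible whenever $|\bm Z|\ge2$ with all singletons finite) gives a violated pair with finite $f$. Alternatively, if $c=\infty$, the claim is read with the convention that the algorithm instead terminates; in the paper's flow I expect the authors simply note that $c$ is finite once a violation with finite $f$ exists.

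With $c<\infty$, every bag with $g=\infty$ lies outside $\calI$, so $\calI$ covers no tree decomposition of $Q$. Lemma~\ref{lmm:primal-feasible}, Claim~(\ref{lmm:primal-feasible:2}), then yields $c\le\subw(Q,\Delta,\bm n)$. The main obstacle I expect is precisely the finiteness of $c$, i.e., ruling out a degenerate state where uncovered bags exist but every submodularity violation has infinite $f$. I would handle it via the minimal-infinite-set argument above, using that all singletons and atoms appear in $\calD$ after calibration (monotonicity projects atoms onto their subsets).
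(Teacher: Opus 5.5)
Your proposal is correct and follows the paper's proof: you first establish $c<\infty$ from the failed termination test in line~\ref{alg:ppanda:terminate}, then observe that $\calI\subseteq\setof{\bm Z}{g(\bm Z)<\infty}\subseteq\setof{\bm Z}{R(\bm Z)\in\calD}$ covers no tree decomposition, and invoke Claim~(\ref{lmm:primal-feasible:2}) of Lemma~\ref{lmm:primal-feasible}. Your minimal-infinite-set argument for $c<\infty$ is the direct form of the paper's contradiction (``if $c=\infty$ then $\bm g$ is a finite polymatroid, so $\calD$ covers every tree decomposition''), and it usefully makes explicit a fact the paper leaves implicit: the atoms of $Q$, and hence all singletons, have finite $g$ because of how $\bm g$ is initialized and only ever decreased, not because of the cardinality invariant alone.
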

    \begin{proof}[Proof of Claim~\ref{clm:c-subw}]
        First, we show that $c$ in line~\ref{alg:ppanda:c} cannot be $\infty$. In particular,
        if $c=\infty$, then $\bm g$ is already a polymatroid, hence
        $\bm g <\infty$.
        By the cardinality invariant, $\calD$ must contain a finite relation $R(\bm X)$
        for every $\bm X \subseteq \bm V$, but then $\calD$ must cover every tree decomposition in $\calT(Q)$,
        contradicting the failure of the condition in line~\ref{alg:ppanda:terminate}.

    Assuming $c< \infty$, Claim~\ref{clm:c-subw} follows from Lemma~\ref{lmm:primal-feasible}.
    In particular, 
    the condition in line~\ref{alg:ppanda:terminate} ensures that there is no tree decomposition in $\calT(Q)$ that is covered by
    \begin{align*}
        \calI'\defeq\setof{\bm X \subseteq \bm V}{R(\bm X) \in \calD} \supseteq
        \setof{\bm X \subseteq \bm V}{g(\bm X) < \infty} \supseteq
        \setof{\bm X \subseteq \bm V}{g(\bm X) \leq c} = \calI
    \end{align*}
    The first $\supseteq$ above follows from the cardinality invariant, while the second follows from $c < \infty$.
    The statistics invariant ensures that $\bm g\models(\Delta, \bm n)$, hence, Lemma~\ref{lmm:primal-feasible} applies.
    \end{proof}

    Thanks to Claim~\ref{clm:c-subw}, the time needed to compute each join $R^{\ell, i}(\bm X \cup \bm Y)$ in line~\ref{alg:ppanda:join} is within
    \begin{align}
        |R^{\ell, i}(\bm X \cup \bm Y)| \leq |R(\bm X)|\cdot \deg_{R^{\ell, i}}(\bm Y|\bm X\cap \bm Y) \leq N^{g(\bm X)} \times \theta=
        N^{f(\bm X, \bm Y)} =N^c \leq N^{\subw(Q, \Delta, \bm n)}
        \label{eq:join-size}
    \end{align}
    Therefore, the total time to compute the joins for $i$ from $1$ to $\ceil{N^\epsilon}$ is
    $O(N^{\subw(Q, \Delta, \bm n)+\epsilon})$.
    Moreover, by Proposition~\ref{prop:calibrate-runtime}, the time spent in \calibrate is linear
    in the size of the current database $\calD$, which cannot exceed $O(N^{\subw(Q, \Delta, \bm n)})$, by Eq.~\eqref{eq:join-size}.
    This proves that the time spent at each internal node is $O(N^{\subw(Q, \Delta, \bm n)+\epsilon})$, hence the time spent at {\em all} internal nodes combined is $O(N^{\subw(Q, \Delta, \bm n)+\epsilon+\epsilon'})$.

    Finally, we analyze the time needed to report the output at leaf nodes.
    The recursion tree can have $O(N^{\epsilon'})$ leaves, and each output tuple can occur in multiple leaves. However, the total number of tree decompositions is constant in data complexity,
    thus we can group together leaves that use the same tree decomposition in line~\ref{alg:ppanda:terminate} (by taking the union of their bags).
    Thanks to this grouping, each output tuple can still occur in multiple groups but the total number of groups is constant, thus allowing us to remove duplicates with extra $O(1)$ overhead and report the output in $O(\out)$ time.
    This proves that the overall runtime of $\primalalgo$ is $O(N^{\subw(Q, \Delta, \bm n)+\epsilon+\epsilon'}+ \out)$, as desired.
    
    \paragraph*{Constant-delay enumeration (CDE)}
    In order to support CDE after a pre-processing time of $O(N^{\subw(Q, \Delta, \bm n)+\epsilon+\epsilon'})$,
    we replace running Yannakakis algorithm in line~\ref{alg:ppanda:yannakakis}
    with the enumeration algorithm from~\cite{DBLP:conf/csl/BaganDG07}.
    (Recall from Section~\ref{sec:notation} that all tree decompositions in $\calT(Q)$ are {\em free-connex}.)
    As mentioned above, the same output tuple can still occur in $O(1)$ leaves,
    hence can have $O(1)$ duplicates.
    However, we can use the {\em Cheater's Lemma}~\cite{10.1145/3450263} as a standard technique
    to support CDE of the union of a constant number of sets,
    each of which supports CDE. This completes the proof of all claims made in Theorem~\ref{thm:runtime}.
\section{Conclusion and Open Problems}
\label{sec:conclusion}

We introduced an algorithm, called \primalalgo, for evaluating a CQ $Q$ in submodular width time,
with an extra overhead of $O(N^\epsilon)$ where $\epsilon>0$ can be arbitrarily small.
The algorithm handles arbitrary~{\em statistics}, which are degree constraints, on the input database instance.
These statistics can significantly lower the value of the submodular width compared to the original definition~\cite{DBLP:journals/jacm/Marx13}.
The submodular width is defined via a maximization problem which naturally has equivalent primal and dual formulations.
Correspondingly, we classify query evaluation algorithms that target the submodular width into two classes: (a) {\em primal algorithms} which operate in the primal space (the space of polymatroids) of this maximization problem
and include the algorithms from~\cite{DBLP:journals/jacm/Marx13, berkholz_et_al:LIPIcs.MFCS.2019.58} as well as our new algorithm; and (b) {\em dual algorithms} which operate in the dual space
(the space of Shannon inequalities) and include PANDA~\cite{DBLP:conf/pods/Khamis0S17,theoretics:13722}.
Because it is a dual one, PANDA is quite involved and requires transforming a CQ
into disjunctive datalog rules (DDRs) as well as using Shannon entropy.
In contrast, our new primal algorithm is simpler, more direct, and bypasses the need for DDRs and Shannon entropy.

We conclude with a couple of major open problems in this area.

\paragraph*{\bf Solving \scqs in submodular width time.}
It might be tempting at first to try to extend our algorithm to solve counting conjunctive queries (\scqs) in time $O(N^{\subw(Q)+\epsilon})$. This is because the algorithm recursively partitions the database
instance into {\em disjoint} parts (lines~\ref{alg:ppanda:heavylight} and~\ref{alg:ppanda:equaldegree} of Algorithm~\ref{algo:ppanda}).
However, as explained earlier, there is no guarantee that the tree decomposition found at the end of the recursion (line~\ref{alg:ppanda:terminate}) will actually encompass all intermediate relations that the algorithm created.
This seems to be a common weakness of {\em all} prior algorithms~\cite{DBLP:journals/jacm/Marx13, berkholz_et_al:LIPIcs.MFCS.2019.58,DBLP:conf/pods/Khamis0S17,theoretics:13722}, which also fall
short of solving \scqs in submodular width time.
To circumvent this issue, a weaker version of the submodular width, called the {\em sharp-submodular width}, that allows for counting, was introduced in~\cite{10.1145/3426865}.
However, it was shown recently
that this new width is not the best possible: There are \scqs where counting can be done faster
than the sharp-submodular width time~\cite{10.1145/3717823.3718141}.
This leaves open the question of whether we can actually solve \scqs in submodular width time,
and if not, what is the best possible complexity measure for \scqs.

\paragraph*{\bf Solving CQs in {\em entropic width} time.}
The current submodular width definition from Eq.~\eqref{eq:subw} takes a maximum over all polymatroids. It is possible to derive a tighter width notion by maximizing over only the {\em entropic functions}.
This tighter notion has been referred to as the {\em entropic width} in~\cite{DBLP:conf/pods/Khamis0S17}, and we denote it here by $\entw(Q)$.
Another way to put it is that the submodular width only considers Shannon inequalities, whereas
the entropic width additionally includes {\em non-Shannon inequalities}~\cite{DBLP:journals/tit/ZhangY97,zhang1998characterization}.
Recent work has shown a lower bound\footnote{For consistency, we assume here that $\entw(Q)$ is defined with log base $N$, similar to Remark~\ref{rmk:log-base}.} of $\Omega(N^{(1-\epsilon)\cdot\entw(Q)})$ for every $\epsilon > 0$ on the size of the {\em semiring circuit} needed to represent the output of a CQ~\cite{10.1145/3651588}.
However, no algorithm is known to answer a CQ in time written in terms of $O(N^{\entw(Q)})$.
Achieving this bound remains an open problem~\cite{kolaitis_et_al:DagRep.12.7.180}.

\begin{acks}
H. Chen acknowledges the support of EPSRC grants EP/V039318/1 and EP/X03190X/1.
\end{acks}

\bibliographystyle{ACM-Reference-Format}
\bibliography{bib,hubiebib}


\end{document}